\newtheorem{theorem}{Theorem}
\newtheorem{example}{Example}
\newtheorem{lemma}{Lemma}
\newtheorem{prop}{Proposition}
\newcommand{\indep}{\mathop{\perp\!\!\!\perp}}
\def\ep{{\varepsilon}}
\def\bal{{\text{\boldmath $\alpha$}}}
\def\bbe{{\text{\boldmath $\beta$}}}
\def\bep{{\text{\boldmath $\epsilon$}}}
\def\bsi{{\text{\boldmath $\sigma$}}}
\def\bmu{{\text{\boldmath $\mu$}}}
\def\Si{{\Sigma}}
\def\bSi{{\text{\boldmath $\Si$}}}
\def\e{{\text{\boldmath $e$}}}
\def\x{{\text{\boldmath $x$}}}
\def\z{{\text{\boldmath $z$}}}
\def\I{{\text{\boldmath $I$}}}
\def\Q{{\text{\boldmath $Q$}}}
\def\W{{\text{\boldmath $W$}}}
\def\X{{\text{\boldmath $X$}}}
\def\Y{{\text{\boldmath $Y$}}}
\def\diag{{\rm diag\,}}
\def\cS{{\mathcal{S}}}
\def\cP{{\mathcal{P}}}
\def\cM{{\mathcal{M}}}
\begin{document}
\title{Integrate Meta-analysis into Specific Study (InMASS) for Estimating Conditional Average Treatment Effect}
\author[1]{Keisuke Hanada \footnote{Address: Graduate School of Engineering Science, Osaka University, 1-3 Machikaneyama-cho, Toyonaka, Osaka, Japan. \quad
E-Mail: hanada.keisuke.es@osaka-u.ac.jp}}
\author[2]{Masahiro Kojima}
\affil[1]{Osaka University}
\affil[2]{The Institute of Statistical Mathematics}
\maketitle

\abstract{\noindent
Randomized controlled trials are the standard method for estimating causal effects, ensuring sufficient statistical power and confidence through adequate sample sizes. However, achieving such sample sizes is often challenging. This study proposes a novel method for estimating the average treatment effect (ATE) in a target population by integrating and reconstructing information from previous trials using only summary statistics of outcomes and covariates through meta-analysis. The proposed approach combines meta-analysis, transfer learning, and weighted regression. Unlike existing methods that estimate the ATE based on the distribution of source trials, our method directly estimates the ATE for the target population. The proposed method requires only the means and variances of outcomes and covariates from the source trials and is theoretically valid under the covariate shift assumption, regardless of the covariate distribution in the source trials. Simulations and real-data analyses demonstrate that the proposed method yields a consistent estimator and achieves higher statistical power than the estimator derived solely from the target trial.
}
\par\vspace{4mm}
{\it Keywords and phrases:}  conditional average treatment effect, meta-analysis, transfer learning, weighted linear regression.

\section{Introduction}
In clinical trials and observational studies, outcomes and covariates are carefully considered to compare treatment and control groups within the target population. With the widespread adoption of evidence-based medicine (EBM), clinical trials and observational studies are required to produce scientifically valid results. Sufficient sample sizes are necessary to draw reliable conclusions from group comparisons. However, increasing sample sizes often requires a long time for participant recruitment and evaluation. In cases of rare diseases or limited target populations, achieving the required sample size is challenging, and studies may even be discontinued due to sample size limitations \citep{rees2019noncompletion}. Against this background, ``data borrowing''---which integrates results from previous individual clinical trials or meta-analyses as historical controls or as part of the treatment group in ongoing clinical trials or observational studies---has garnered increasing attention in recent years \citep{lewis2019borrowing, chu2021dynamic, kojima2023dynamic}. While research designs tailored to specific populations allow for clear interpretation of comparative results, such designs often become small sample sizes, which necessitate considerable time to achieve adequate enrollment \citep{sully2013reinvestigation}. 
In such cases, information from completed trials or meta-analyses is helpful to achieve sufficient statistical power. 
From a causal inference perspective, external data can serve as a valuable resource for estimating the conditional average treatment effect (CATE) for the target population \citep{goring2019characteristics, zhou2021incorporating}.

Meta-analyses and external trials have been incorporated into target trials and observational studies, but existing methods assume that the results of the target trial can be extrapolated to the overall population. 
For example, \cite{rover2020dynamically} proposed incorporating information from non-target trials using shrinkage estimation, which relies solely on the assumptions of the random-effects model. While this method can be applied within the meta-analytic framework using only aggregate data (AD) from the target trial, it implicitly assumes population homogeneity and cannot account for covariate information. 
The Matching-Adjusted Indirect Comparison (MAIC) method enables indirect comparisons between two treatments by matching individual participant data (IPD) from a target trial to the baseline AD of another trial \citep{signorovitch2012matching}. 
The Simulated Treatment Comparison (STC) approach constructs an outcome model from a target trial and compares the model with that of another trial \citep{caro2010no}, but STC is biased when covariate distributions differ significantly between trials. 
Parametric G-computation can reduce the bias occurring at STC by creating pseudo-populations \citep{remiro2022parametric}. 
In the Bayesian framework, the Meta-Analytic-Predictive (MAP) prior method can estimate target trial effects by incorporating meta-analytic results as a prior distribution to borrow historical information \citep{neuenschwander2010summarizing}. 
These approaches borrow external information from AD in previous trials and guarantee comparability by aligning the covariate distributions of the target trial with those of the source trials. 
However, the true interest is the treatment effect for the population of the target trial, not the population evaluated in previous trials. 
This difference may cause a gap between the desired and the practically estimated effects. We propose a novel inference framework that integrates meta-analysis into a specific study (InMASS) through transfer learning to “transfer” the AD from existing trials to the population of the target trial.

The scenarios for borrowing information from existing trials include: (i) using the information for part or all of the control group of the target trial, and (ii) integrating other real-world data (RWD) into an observational study of interest. The scenario (i) applies to cases where the required sample size calculated during trial planning cannot be fully collected for some reason, which necessitates the borrowing of additional participants from existing trials. 
For example, we consider a two-arm randomized controlled trial (RCT) requiring 100 participants per group. If collecting a total of 200 participants is impractical but collecting 120 participants is feasible, a trial with 60 participants per group may lack sufficient statistical power.
In such cases, borrowing information from existing trials can help achieve the required power. 
In clinical trials for new drug development, treatment group participants may not be available from previous trials, so only control group data may be borrowed. For instance, in a trial with 120 participants, 100 may be allocated to the treatment group and 20 to the control group, with additional information borrowed from existing trials. Scenario (ii) involves new observational studies aiming to enhance inference performance within the target population by incorporating results from previous observational studies. In such cases, information on both treatment and control groups is desired; however, previous studies often provide only AD. Thus, integrating AD using meta-analysis and applying the result of meta-analysis to the inference of the new observational study is a valuable approach. These settings are formalized with notation in Section \ref{sec-accessible-info}.

The major contributions of InMASS are summarized as follows. First, InMASS can estimate treatment effects for the target trial population, in contrast to existing borrowing approaches that estimate treatment effects for either the general population or the source trial population. Second, source trials only require the means and variances of outcomes and covariates, without necessitating IPD. InMASS guarantees that borrowing is feasible as long as an AD meta-analysis can be performed on the source trials, which implies that InMASS can be applied in general settings where the means and variances are stable. InMASS shares similarities with Federated Adaptive Causal Estimation (FACE) \citep{Han21012025} in that both aim to obtain the treatment effect for a target population. However, while FACE focuses on aggregating summary results across sites, our approach focuses on the integration of trials. This distinction is reflected in their assumptions: whereas FACE requires the presence of both treatment and control groups at all sites, InMASS allows for single-arm trials.

InMASS employs three tools: meta-regression, transfer learning, and weighted regression. Only AD is available for source trials; thus, direct matching between the IPD of source and target trials is not possible. The meta-regression identifies the relationship between outcomes and covariates in the source trials, thereby reconstructing their IPD. 
The reconstructed IPD aligns with the second moments of the outcomes and covariates in the source trials, which is sufficient for estimating the CATE, as explained in Proposition \ref{prop1}. Transfer learning applies the reconstructed IPD to the target trial under the assumption of covariate shift \citep{shimodaira2000improving}. The covariate shift guarantees comparability between the reconstructed IPD and the IPD in the target trial.
Under this covariate shift, weighted regression can estimate the CATE for the target trial using the density ratio of covariates between the target trial and the source trials. By combining these three tools, InMASS borrows information from the source trials and enhances statistical power for estimating the CATE in the target trial.

The remainder of the paper is organized as follows. In Section \ref{sec-prelim}, we introduce the study setting, notation, and data assumptions required for identifying the CATE. In Section \ref{sec-method}, we present InMASS and demonstrate its theoretical validity. We assess the theoretical properties through simulations and illustrate the application of InMASS with a case study in Section \ref{sec-num-result}. Finally, we conclude in Section \ref{sec-conclusion} with a summary of our findings and a discussion of future research directions.

\section{Preliminaries}\label{sec-prelim}

\subsection{Estimand: Conditional average treatment effect}  
Let $\mathcal{P}$ denote the overall population, which includes both the target and source study populations. Let $\X$ be a $p \times 1$ random vector of covariates, $Y$ a random variable representing the outcome, and $Z$ a random variable indicating group assignment, where $Z = 0$ corresponds to the control group and $Z = 1$ to the treatment group. For a subject $i \in \mathcal{P}$, the observed data $(\x, y, z)$ are drawn from the probability distribution $P_{\mathcal{P}}(\X, Y, Z)$.

In this study, we are interested in estimating the conditional average treatment effect (CATE) for the population of interest $T \subset \mathcal{P}$, which is a fundamental problem in causal inference \citep{holland1986statistics, vegetabile2021distinction}. We first define the average treatment effect (ATE) for the overall population $\mathcal{P}$ as
$$
\delta_{\mathcal{P}} = \mathrm{E}_{\mathcal{P}} \left[Y_{1} - Y_{0}\right] = \mathrm{E}_{\mathcal{P}} \left[Y \mid Z=1\right] - \mathrm{E}_{\mathcal{P}} \left[Y \mid Z=0\right],
$$
where $Y_{a}$ ($a=0,1$) represents the potential outcomes, $\mathrm{E}_{\mathcal{P}}[\, \cdot \,] = \int_{\mathcal{P}} \, \cdot \, dP_{\mathcal{P}}(x,y,z)$ denotes the expectation over the population $\mathcal{P}$, and $Y = ZY_{1} + (1-Z)Y_{0}$. The CATE for a specific subpopulation $S \subset \mathcal{P}$ is defined as
\begin{align}\label{cate}
    \delta_{S} = \mathrm{E}_S\left[Y_{1} - Y_{0}\right].
\end{align}
Our target estimand is $\delta_T$, which we estimate using the IPD of the target trial $T$ and AD from meta-analysis.

Let $S_k \subset \cP$ denote the population of the $k$-th study. We adopt the assumption given by \cite{rosenbaum1983central} for estimating the CATE. This assumption requires that the meta-analysis trials and the target trial include at least one randomized controlled trial.
Suppose that $\cS = \{T, S_1, \dots, S_K\} \subset \cP$ represents the complete set consisting of the target trial and the trials included in the meta-analysis, and let $\cM = \cS \setminus \{T\}$ denote the meta-analysis set from the overall population. We define the following assumption:
\begin{enumerate}
    \item[(A1)] \textbf{Conditional independence}: There exists at least one study $S \in \cS$ such that $0 < P(Z=1 \mid S) < 1$ and $\{Y_{1}, Y_{0}\} \indep Z \mid S$.
\end{enumerate}
Assumption (A1) requires that at least one trial includes both treatment and control groups and that the outcome is independent of the treatment assignment due to randomization or similar mechanisms. Thus, conditional independence is not required to hold across all populations. Since the set $\cS$ may include single-arm trials and non-randomized trials, (A1) is a weaker assumption than the standard version of conditional independence. From (A1), conditional positivity for the source population $\cS$, defined as $0 < P(Z = 1 \mid \cS) < 1$, holds. Then, provided that $0 < P(Z=1 \mid T) < 1$, we can identify the CATE in the target trial $T$ as 
\[
\delta_T = \mathrm{E}_T[Y_{1} - Y_{0}] = \mathrm{E}_T[Y \mid Z=1] - \mathrm{E}_T[Y \mid Z=0].
\]
However, the CATE cannot be estimated when the target trial is single-armed. We aim to efficiently estimate $\delta_T$ by combining both the target and source trials $\cS$ when direct estimation of $\delta_T$ is infeasible or challenging.

\subsection{Assumption of accessible information}
\label{sec-accessible-info}
A standard approach for estimating $\delta_T$ is to summarize IPD solely from the target population $T$. 
Let $X_S$ be the covariate matrix for the set $S$, and let $Y_{S,j}$ represent the outcome for group $j$ in $S$. 
The observed covariates and outcome for subject $i$ ($i=1, \dots, n_S$) in $S$ are denoted by $\x_{Si}$ and $y_{Si}$, respectively. 
In this case, the CATE for population $T$, $\delta_T$, is estimated as 
\[
\hat{\delta}_{T} = \mathrm{E}[Y_{T,1}] - \mathrm{E}[Y_{T,0}] = \bar{y}_{T1} - \bar{y}_{T0},
\]
with the variance estimator given by 
\[
\hat{\sigma}_{T}^2 = n_{T1}^{-1} \hat{\sigma}_{T1}^2 + n_{T0}^{-1} \hat{\sigma}_{T0}^2.
\]
Here, $\bar{y}_{Tj}$ denotes the mean of the outcomes for group $j$ in $T$, and $\hat{\sigma}_{Tj}^2$ represents the estimated variance.

On the other hand, IPDs from completed trials are often inaccessible. Without IPDs, calculating propensity scores or similar balancing scores to match participants in the meta-analysis set $\cM$ with those in the target population $T$ becomes challenging. Most existing approaches estimate these scores by aligning participants in $T$ with the AD from $\cM$ \citep{rover2020dynamically, signorovitch2012matching, caro2010no, remiro2022parametric, neuenschwander2010summarizing}.
However, by aligning the IPD of $T$ with the AD of $\cM$, these methods estimate $\delta_{\cS}$ rather than $\delta_T$. For the $k$-th trial conducted on the population $S_k$, we can observe, for each group $j\in\{0,1\}$, the number of participants $n_{kj}$, the mean of the outcomes $\bar{y}_{kj}$, the mean of the covariates $\bar{\x}_{kj}$, the variance of the outcomes $\sigma_{kj}^2$, and the variance of the covariates $\bsi_{\bar{\x}_{kj}}^2 \I_p$, respectively. Here, $\I_p$ denotes the $p\times p$ identity matrix, and we assume that the information obtained from each trial is independent across trials. The covariates are not limited to continuous variables; categorical variables representing frequencies or proportions are also acceptable. 
For instance, we consider a binary covariate $X_{bi}$, $i=1,\dots,n$, taking values in $\{0,1\}$. The proportion of ones can be estimated as $\hat{p}_1=\frac{1}{n}\sum_{i=1}^n x_{bi}$, with variance $\hat{p}_1(1-\hat{p}_1)/n$. Since we are interested in the difference between groups, we assume that the outcome is a continuous variable. However, extensions to categorical outcomes can be achieved by incorporating methods such as logistic regression.

Let $(\X_{T,i}, Y_{T,i}, Z_{T,i})$ and $(\X_{S_k,i}, Y_{S_k,i}, Z_{S_k,i})$ follow the distributions $P_{T}(\X, Y, Z)$ and $P_{\cM}(\X, Y, Z)$, respectively. To utilize information from $\cM$, we assume a covariate shift between the target set $T$ and the meta-analysis set $\cM$ as follows \citep{shimodaira2000improving}:
\begin{enumerate}
    \item[(A2)] \textbf{Covariate shift}: Suppose that $(\X_{T,i}, Y_{T,i}, Z_{T,i}) \sim P_{T}(\X, Y, Z)$ and $(\X_{\cM,i}, Y_{\cM,i}, Z_{\cM,i}) \sim P_{\cM}(\X, Y, Z)$. Then,
    \begin{align*}
        P_{T}(Y \mid \X, Z) &= P_{\cM}(Y \mid \X, Z).
    \end{align*}
\end{enumerate}
This assumption implies that the outcomes for each group in $\cM$ and $T$ follow the same conditional distribution given the covariates, while allowing the marginal distributions of the covariates themselves to differ (i.e., $P_T(\X, Z) \neq P_{\cM}(\X, Z)$ is permitted). Assumption (A2) is commonly used in transfer learning \citep{gretton2006kernel, huang2006correcting, wang2014active}. In this study, assumption (A2) enables us to transfer the outcome information from $\cM$ to the target population $T$ by using the ratio of the joint density of $(\X,Z)$ between $T$ and $\cM$. The following two examples are typical cases where we apply the InMASS method:
\begin{example}[Integrate to target trial for historical control]  
\label{ex-hist}  
    We aim to estimate the CATE $\delta_T$ of an ongoing trial $T$ using the results of a meta-analysis. The available data consist of the IPD $\{y_{Ti}, \x_{Ti}\}_{i=1,\dots,n_T}$ for $T$ and the AD $\{n_{kj}, \bar{y}_{kj}, \bar{\x}_{kj}, \sigma_{\bar{y}_{kj}}^2, \bsi_{\bar{\x}_{kj}}^2\}_{j=0,1,\; k=1,\dots,K}$ for the trials in the meta-analysis $S_1, \dots, S_K$. To satisfy the planned statistical power and significance level, we borrow $S_1, \dots, S_K$ as historical controls for $T$.
\end{example}  

\begin{example}[Integrate to target trial for both treatment and control]  
\label{ex-tc}  
    We aim to estimate the CATE $\delta_T$ for a trial $T$ of interest, such as a clinical or observational study. The available IPD and AD are the same as in Example \ref{ex-hist}; specifically, the IPD for $T$ is available, while only AD is available for $S_1, \dots, S_K$. In this scenario, information from $S_1, \dots, S_K$ is used for both the treatment and control groups of the target trial $T$. 
\end{example}

\section{Methods}\label{sec-method}

\subsection{Overview of the proposed method}
InMASS consists of four steps: (1) meta-analysis, (2) IPD reconstruction, (3) density ratio estimation, and (4) weighted regression. 
The flowchart for InMASS is displayed in Figure \ref{fig-flowchart}. The method enables the estimation of $\delta_T$ by reconstructing IPD outcomes and covariates from the AD of the meta-analysis.
\begin{enumerate}
    \item[(1)] \textbf{Meta-analysis}: In this step, we estimate the regression parameters required for IPD reconstruction. Since this study focuses on estimating $\delta_T$, methods applicable to continuous variables—such as restricted maximum likelihood, the method of moments, or other conservative approaches—can be used \citep{dersimonian1986meta, hartung1999alternative, raudenbush2009analyzing}.
    \item[(2)] \textbf{IPD Reconstruction}: First, covariates are reconstructed using the reported means and variances for each trial, followed by outcome reconstruction using the meta-analytic regression parameters. As shown in Section \ref{sec-ripd}, the reconstructed covariates and outcomes are guaranteed to match the reported means and second moments.
    \item[(3)] \textbf{Density Ratio Estimation}: The distribution of the reconstructed IPD from $\cM$ differs from that of the target set $T$. The density ratio between $T$ and $\cS = (\cM, T)$ is used to adjust for this difference in order to estimate $\delta_T$. In this step, logistic regression is applied to estimate the density ratio under the covariate shift assumption. As demonstrated in Section \ref{sec-iwlm}, if valid density ratios are obtained, matching the second moments is sufficient for the reconstruction of IPD from $\cM$.
    \item[(4)] \textbf{Weighted Regression}: Finally, the reconstructed IPD from the meta-analysis and the IPD from the target trial are integrated, and a weighted regression analysis is performed to estimate $\delta_T$.
\end{enumerate}
Through these four steps, InMASS provides a framework for integrating information from both existing meta-analyses and the target trial to efficiently estimate the CATE for the target trial $T$.

\begin{figure}[H]
    \centering
    \includegraphics[width=0.9\linewidth]{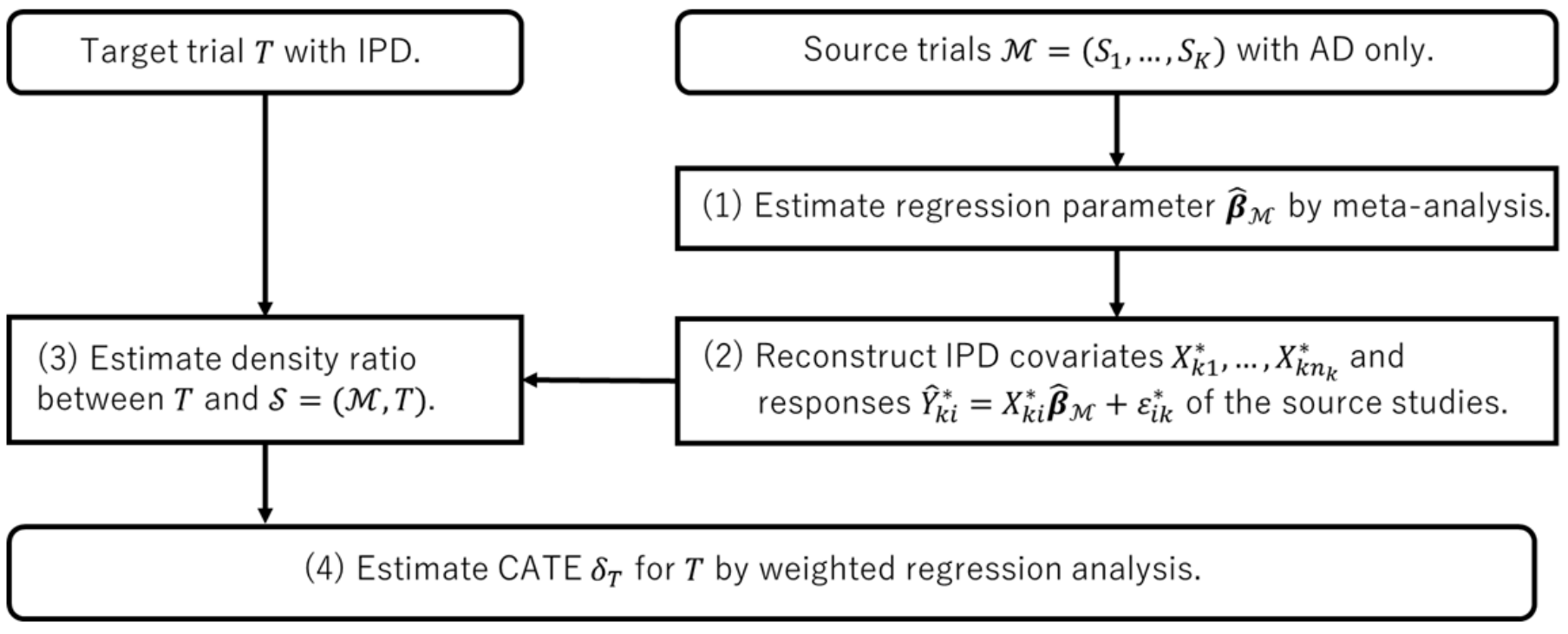}
    \caption{Flowchart of InMASS. AD: aggregate data, IPD: individual participant data, CATE: conditional average treatment effect.}
    \label{fig-flowchart}
\end{figure}

\subsection{Reconstructing individual participant data from meta-analysis}\label{sec-ripd}
We perform a random-effects meta-analysis using $K$ trials \citep{dersimonian1986meta} for reconstructing IPD. The estimated treatment effect of the $k$-th trial, defined as $\hat{\delta}_{S_k} = \bar{y}_{k1} - \bar{y}_{k0}$, is assumed to be normally distributed with mean $\delta_{S_k}$ and within-trial variance $\sigma_{S_k}^2$, i.e., $\hat{\delta}_{S_k} \sim N(\delta_{S_k}, \sigma_{S_k}^2)$.
In random-effects meta-analysis, the true treatment effect $\delta_{S_k}$ for the $k$-th trial is assumed to follow $\delta_{S_k} \sim N(\delta_{\cM}, \tau^2)$, where $\delta_{\cM}$ and $\tau^2$ represent the true overall treatment effect of $\cM$ and the between-trial variance, respectively. Marginalizing over $\delta_{S_k}$ yields that the $k$-th trial treatment estimator $\hat{\delta}_{S_k}$ is distributed as $\hat{\delta}_{S_k} \sim N(\delta_{\cM}, \sigma_{S_k}^2 + \tau^2)$.
A simple extension of the random-effects model to meta-regression replaces the overall treatment effect $\delta_{\cM}$ with a linear predictor $X_k \bbe$, i.e.,
\begin{align}
    \label{eq-rm-reg}
    \hat{\delta}_{S_k} \sim N(\bar{\x}_k \bbe_{\cM}, \sigma_{S_k}^2 + \tau^2),
\end{align}
where $\bar{\x}_k = (\bar{\x}_{k1}, \bar{\x}_{k0})^T$ is a $2 \times p$ matrix of covariate means and $\bbe_{\cM} = (\beta_{\cM 1}, \dots, \beta_{\cM p})^T$ is a parameter vector for the meta-analysis population $\cM$.

Under the meta-regression model \eqref{eq-rm-reg}, various methods have been proposed for estimating the unknown parameters $(\bbe_{\cM}, \tau^2)$ \citep{dersimonian1986meta, veroniki2016methods}. While these methods tend to underestimate the between-trial variance $\tau^2$ when the number of trials is small \citep{li1994bias, viechtbauer2005bias}, the regression parameter $\bbe_{\cM}$ remains both unbiased and consistent in estimation.
Therefore, we assume that the estimator of the regression parameter, $\hat{\bbe}_{\cM}$, is unbiased and consistent, i.e., 
\[
{\rm E}[\hat{\bbe}_{\cM}] = \bbe_{\cM} \quad \text{and} \quad {\rm V}[\hat{\bbe}_{\cM}] = \Si_{\hat{\bbe}_{\cM}} \to \mathbf{0} \quad \text{as} \quad K \to \infty,
\]
where $\Si_{\hat{\bbe}_{\cM}}$ denotes the variance of $\hat{\bbe}_{\cM}$.

We reconstruct the individual covariate $\X_{ki}^*$, for $i=1,\dots,n_k$, from either the true distribution $F_{\X}(\cdot; \bmu_k, \bsi_{\bmu_k}^2)$—which has mean $\bmu_k$ and variance $\bsi_{\bmu_k}^2 \I_p$—or from the estimated distribution $\hat{F}_{\X}(\cdot; \bar{\x}_k, \hat{\bsi}_{\bar{\x}_k}^2)$. The observed outcome for subject $i$ is reconstructed according to the model
\begin{align}
\label{eq-ripd}
    \hat{Y}_{ki}^* &= \X_{ki}^* \hat{\bbe}_{\cM} + \varepsilon_{ki}^*,
\end{align}
where the error term $\varepsilon_{ki}^*$ is assumed to be normally distributed with mean $0$ and variance $\hat{\sigma}_k^2 - (\hat{\bbe}_{\cM}^2)^T \hat{\bsi}_{\bar{\x}_k}^2$.
Since the true distribution $F_{\X}(\cdot; \bmu_k, \bsi_{\bmu_k}^2)$ is generally not available in practice, we use the estimated distribution $\hat{F}_{\X}(\cdot; \bar{\x}_k, \hat{\bsi}_{\bar{\x}_k}^2)$ to reconstruct covariates. While the choice of $\hat{F}_{\X}(\cdot)$ depends on whether $\X$ is continuous or discrete, any distribution capable of reproducing the mean and variance may be employed. For instance, a normal distribution $N(\bar{\x}_k, \hat{\bsi}_{\bar{\x}_k}^2)$ is appropriate for continuous variables, whereas a Bernoulli distribution, $\text{Bernoulli}(\bar{\x}_k)$, can be used for binary variables. This assumption implies that the sample means and variances for each trial converge to their true values:
\begin{enumerate}
    \item[(C1)] For any trial $S_k \in \cM$, $\bar{\x}_k \to \bmu_k$ and $\hat{\bsi}_{\bar{\x}_k}^2 \to \bsi_{\bmu_k}^2$ in probability as $n_k \to \infty$.
\end{enumerate}
Then, we can reconstruct the outcome in trial $S_k$ consistently up to the second moment.

\begin{prop}[Restorability of individual participant data]\label{prop1}
    Suppose that the true covariate distribution $F_{\X}$ is known. Then, the resampled individual treatment effect $\hat{Y}_{ki}^*$ defined in \eqref{eq-ripd} has mean $\bmu_k \hat{\bbe}_{\cM}$ and variance $\sigma_k^2$. Furthermore, if $F_{\X}$ is unknown and assumption (C1) is satisfied, then the resampled individual treatment effect $\hat{Y}_{ki}^*$ obtained using the estimated distribution $\hat{F}_{\X}(\cdot; \bar{\x}_k, \hat{\bsi}_{\bar{\x}_k}^2)$ converges in probability, as $n_k\to\infty$, to a distribution with mean $\bmu_k \hat{\bbe}_{\cM}$ and variance $\sigma_k^2$.
\end{prop}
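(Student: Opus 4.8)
The plan is to treat the meta-regression estimate $\hat{\bbe}_{\cM}$ together with the reported (or estimated) summary statistics $\bar{\x}_k$, $\hat{\bsi}_{\bar{\x}_k}^2$, and $\hat{\sigma}_k^2$ as fixed, and to compute the moments of $\hat{Y}_{ki}^*$ conditionally on them. This is the correct framing because the asserted mean $\bmu_k\hat{\bbe}_{\cM}$ still carries the estimator $\hat{\bbe}_{\cM}$, so the claim concerns the distribution of the resampling noise $(\X_{ki}^*, \varepsilon_{ki}^*)$ given the meta-analytic output, not an average over it. Throughout I would use that $\X_{ki}^*$ and $\varepsilon_{ki}^*$ are drawn independently, that $\varepsilon_{ki}^*$ has mean zero, and that the covariate covariance matrix is diagonal, $\diag(\bsi_{\bmu_k}^2)$ in the known case and $\diag(\hat{\bsi}_{\bar{\x}_k}^2)$ in the estimated case, which is what makes the quadratic form $\hat{\bbe}_{\cM}^{T}\diag(\bsi_{\bmu_k}^2)\hat{\bbe}_{\cM}$ collapse to the dot product $(\hat{\bbe}_{\cM}^2)^{T}\bsi_{\bmu_k}^2$ appearing in \eqref{eq-ripd}.

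For the first claim (true $F_{\X}$ known), I would compute the two moments directly. The mean follows from linearity: $\mathrm{E}[\hat{Y}_{ki}^*] = \mathrm{E}[\X_{ki}^*]\hat{\bbe}_{\cM} + \mathrm{E}[\varepsilon_{ki}^*] = \bmu_k\hat{\bbe}_{\cM}$. For the variance, independence gives $\mathrm{V}[\hat{Y}_{ki}^*] = \mathrm{V}[\X_{ki}^*\hat{\bbe}_{\cM}] + \mathrm{V}[\varepsilon_{ki}^*]$; the first term equals $(\hat{\bbe}_{\cM}^2)^{T}\bsi_{\bmu_k}^2$ by the diagonal-covariance collapse, and the error variance is defined precisely as $\sigma_k^2 - (\hat{\bbe}_{\cM}^2)^{T}\bsi_{\bmu_k}^2$, so the two pieces sum to $\sigma_k^2$. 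The cancellation is by construction: the error variance is chosen exactly to absorb the residual left after the covariate-explained part.

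For the second claim (estimated $\hat{F}_{\X}$ under (C1)), I would run the identical computation with the hatted quantities, obtaining conditional mean $\bar{\x}_k\hat{\bbe}_{\cM}$ and conditional variance exactly $\hat{\sigma}_k^2$ (again the error variance in \eqref{eq-ripd} is built so that the covariate-explained variance $(\hat{\bbe}_{\cM}^2)^{T}\hat{\bsi}_{\bar{\x}_k}^2$ and the error variance add back to $\hat{\sigma}_k^2$). It then remains to pass to the limit $n_k\to\infty$. Condition (C1) gives $\bar{\x}_k\to\bmu_k$ in probability, so by the continuous mapping theorem, with $\hat{\bbe}_{\cM}$ held fixed, the mean $\bar{\x}_k\hat{\bbe}_{\cM}\to\bmu_k\hat{\bbe}_{\cM}$ in probability; and the convergence of the outcome variance $\hat{\sigma}_k^2\to\sigma_k^2$ delivers the stated variance limit. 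Interpreting ``converges in probability to a distribution with mean $\bmu_k\hat{\bbe}_{\cM}$ and variance $\sigma_k^2$'' as convergence in probability of the conditional mean and variance, this closes the argument by Slutsky-type reasoning.

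The main obstacle, as I see it, is not any single calculation but keeping the two layers of randomness cleanly separated---the resampling of $(\X_{ki}^*, \varepsilon_{ki}^*)$ versus the estimation error in the summary statistics---and conditioning correctly so that the exact variance cancellation is visible. A secondary gap worth flagging is that (C1) as stated controls only $\bar{\x}_k$ and $\hat{\bsi}_{\bar{\x}_k}^2$, whereas the variance limit also requires $\hat{\sigma}_k^2\to\sigma_k^2$ in probability; I would either append this consistency of the reported outcome variance to (C1) or note that it is a standard consequence of the law of large numbers under each trial's sampling model. I would also remark, as a caveat rather than part of the claim, that the construction presupposes nonnegativity of the error variance $\hat{\sigma}_k^2 - (\hat{\bbe}_{\cM}^2)^{T}\hat{\bsi}_{\bar{\x}_k}^2$.
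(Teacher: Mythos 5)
Your proposal is correct and follows essentially the same route as the paper's proof: condition on $\hat{\bbe}_{\cM}$, compute the first two moments of $\hat{Y}_{ki}^*$ directly from the independence of $\X_{ki}^*$ and $\varepsilon_{ki}^*$ and the by-construction choice of the error variance, then invoke (C1) to pass to the limit in the estimated-distribution case. Your side remarks---that the limit $\hat{\sigma}_k^2 \to \sigma_k^2$ is used but not literally contained in (C1), and that the error variance must be nonnegative---are fair observations about implicit assumptions the paper also makes, but they do not change the argument.
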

\begin{proof}
    The proof is provided in Appendix \ref{app}.
\end{proof}
Although the consistency of higher moments (third or higher) is unclear, Proposition \ref{prop1} is sufficient for estimating $\delta_T$ and its variance, as demonstrated in Section \ref{sec-iwlm}. When estimating $\bbe_{\cM}$ using standard meta-analysis methods, the expectation and variance of $\hat{\bbe}_{\cM}$ satisfy 
${\rm E}[\hat{\bbe}_{\cM}] = \bbe_{\cM}$ and ${\rm V}[\hat{\bbe}_{\cM}] \to \mathbf{0}$ as $K \to \infty$.
Therefore, ${\rm E}[\hat{Y}_{ki}^*] = \bmu_k \bbe_{\cM}$ and ${\rm V}[\hat{Y}_{ki}^*] \to \sigma_k^2$ as $K \to \infty$.
In other words, the expectation obtained from the reconstructed IPD is unbiased, and its variance converges to the true variance of each trial when the number of studies is sufficiently large.

\subsection{Importance weighting for estimating CATE}\label{sec-iwlm}
The importance weighting is useful for transfer learning \citep{byrd2019effect}, as it allows any loss function defined on the source data to be transferred to the target domain. Let $L(\x,y,z)$ be any loss function. Under the covariate shift assumption (A2), we have
\begin{align*}
    {\rm E}_{T}[L(\X,Y,Z)] 
    &= \int_{T} L(\x,y,z) \, dP_T(\x,y,z) \\
    &= \int_{T} L(\x,y,z) \, dP_{T}(y \mid \x,z) \, dP_{T}(\x,z) \, \frac{dP_{\cS}(\x,z)}{dP_{\cS}(\x,z)} \\
    &\quad \text{(since } P_{T}(y \mid \x,z)=P_{\cS}(y \mid \x,z) \text{ from (A2))} \\
    &= \int_{T} \frac{dP_{T}(\x,z)}{dP_{\cS}(\x,z)} \, L(\x,y,z) \, dP_{\cS}(\x,y,z) \\
    &= {\rm E}_{\cS}\left[\frac{dP_{T}(\X,Z)}{dP_{\cS}(\X,Z)} \, L(\X,Y,Z)\right],
\end{align*}
where $dP_{D}(\cdot)$ denotes the probability density function of $(\X,Z)$ for domain $D$, and $dP_{D}(\cdot \mid \x,z)$ denotes the probability density function of $Y$ given $(\x,z)$ for domain $D$.
Suppose that the importance weight is defined by $w_{Si} = \frac{dP_{T}(\x_{Si}, z_{Si})}{dP_{\cS}(\x_{Si}, z_{Si})}$, and consider the loss function $L(\x,y,z) = \left(y-\delta_T z - \x^T \bbe \right)^2$.
Then, the least squares estimator of $\delta_T$ can be obtained.
We further define a boundedness condition for the density ratio to apply importance weighting in estimating $\delta_T$:
\begin{enumerate}
    \item[(C2)] \textbf{Boundedness of Density Ratio:} For any subject set $S \in \cS$, 
    \[
    \sup_{i \in S} \left| \frac{dP_T(\x_{Si}, z_{Si})}{dP_{\cS}(\x_{Si}, z_{Si})} \right| < \infty.
    \]
\end{enumerate}
Under condition (C2), the group difference $\delta_T$ can be estimated as follows:
\begin{align}
\label{deltaR}
    \hat{\delta}_{R} &= \bar{Y}_1 - \bar{Y}_0,
\end{align}
where the weighted mean of each group $j \, (=0,1)$ is defined by
\begin{align*}
    \bar{Y}_j &= \frac{\sum_{S \in \cS} \sum_{i=1}^{n_S} w_{Si}\,\mathbbm{1}(z_{Si}=j)\,\hat{Y}_{Si}}{\tilde{N}_j}.
\end{align*}
Here, $\hat{Y}_{Si}$ denotes the outcome for subject $i$ in set $S$, where 
\[
\hat{Y}_{Si} = 
\begin{cases}
    y_{Ti}, & \text{if } S = T, \\
    \hat{Y}_{ki}^*, & \text{if } S = S_k,
\end{cases}
\]
and the weighted sample size for group $j$ is defined as 
$\tilde{N}_j = \sum_{S \in \cS} \tilde{n}_{Sj}$, with $\tilde{n}_{Sj} = \sum_{i=1}^{n_S} w_{Si}\,\mathbbm{1}(z_{Si}=j)$.
In this context, $z_{Si}$ is an indicator variable such that $z_{Si}=1$ if subject $i$ belongs to the treatment group and $z_{Si}=0$ if subject $i$ belongs to the control group.
The variance of $\hat{\delta}_R$ is given by $\sigma_R^2 = \tilde{N}_1^{-1}\,\sigma_{R1}^2 + \tilde{N}_0^{-1}\,\sigma_{R0}^2$, where the variance for each group $j$ is defined as
\begin{align}
\label{sigma2R}
    \sigma_{Rj}^2 &= \frac{\sum_{S \in \cS} \sum_{i=1}^{n_S} w_{Si}^2\,\mathbbm{1}(z_{Si}=j)\,\left(\hat{Y}_{Si} - {\rm E}_T[Y\mid Z=j]\right)^2}{\tilde{N}_j}.
\end{align}
Since ${\rm E}_T[Y\mid Z=j]$ is generally unknown, it is replaced by its estimator $\bar{Y}_j$, and the resulting variance estimator is denoted by $\hat{\sigma}_{Rj}^2$. Consequently, the estimators $\hat{\delta}_R$ and $\hat{\sigma}_R^2$ are consistent.

\begin{theorem}[Consistency of $\hat{\delta}_{R}$ and $\hat{\sigma}_{R}^2$]\label{th1}
    Under the assumptions (A1) and (A2) and the conditions (C1) and (C2), the estimator $\hat{\delta}_R$ defined in \eqref{deltaR} is a consistent estimator of $\delta_T$, i.e.,
    \[
    \hat{\delta}_R \to \delta_T \quad \text{in probability as} \quad N_j = \sum_{S \in \cS} n_{Sj} \to \infty, \quad j=0,1.
    \]
    Moreover, the variance estimator $\hat{\sigma}_{R}^2$ converges to $\sigma_R^2$. 

    Furthermore, if $\delta_T < \infty$ and $\sigma_T^2 < \infty$, then 
    \begin{align}
        \frac{\hat{\delta}_R - \delta_T}{\sqrt{\tilde{N}_1^{-1}\hat{\sigma}_{R1}^2 + \tilde{N}_0^{-1}\hat{\sigma}_{R0}^2}}
    \end{align}
    converges in distribution to the standard normal distribution as $N_j \to \infty$ for $j=0,1$.
\end{theorem}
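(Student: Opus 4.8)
The plan is to establish the two consistency claims first and then assemble the limiting normality by a Slutsky argument; the engine throughout is the importance-weighting identity derived in Section \ref{sec-iwlm}, which under (A2) converts every target expectation ${\rm E}_T[L(\X,Y,Z)]$ into the reweighted source expectation ${\rm E}_{\cS}[w\,L(\X,Y,Z)]$ with $w=dP_T/dP_{\cS}$. To prove $\hat{\delta}_R\to\delta_T$, I would write $\bar{Y}_j$ as the ratio of the two pooled weighted sums $N^{-1}\sum_{S\in\cS}\sum_i w_{Si}\mathbbm{1}(z_{Si}=j)\hat{Y}_{Si}$ and $N^{-1}\tilde{N}_j$, and apply the weak law of large numbers to numerator and denominator separately. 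Condition (C2) makes the weights bounded, so each summand has finite variance and the law of large numbers applies. The denominator converges to ${\rm E}_{\cS}[w\,\mathbbm{1}(Z=j)]$, which by the identity with $L=\mathbbm{1}(z=j)$ equals $P_T(Z=j)$. In the numerator the target block uses the observed $y_{Ti}$ directly, while each source block uses the reconstructed $\hat{Y}_{ki}^*$; by Proposition \ref{prop1} the reconstructed outcomes reproduce the first moment $\bmu_k\hat{\bbe}_{\cM}$ of the true source outcomes, so the numerator converges to ${\rm E}_{\cS}[w\,\mathbbm{1}(Z=j)Y]={\rm E}_T[\mathbbm{1}(Z=j)Y]$. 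A continuous-mapping/Slutsky step on the ratio then gives $\bar{Y}_j\to{\rm E}_T[Y\mid Z=j]$, whence $\hat{\delta}_R=\bar{Y}_1-\bar{Y}_0\to\delta_T$.

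The variance claim follows the same template applied to the squared summands in \eqref{sigma2R}. Boundedness of $w_{Si}$ from (C2) again yields bounded $w_{Si}^2$, so the law of large numbers applies to $N^{-1}\sum_{S}\sum_i w_{Si}^2\mathbbm{1}(z_{Si}=j)(\hat{Y}_{Si}-\bar{Y}_j)^2$. Here the second-moment guarantee of Proposition \ref{prop1}, namely that $\hat{Y}_{ki}^*$ has variance $\sigma_k^2$ matching the true source trial, is exactly what forces the reconstructed contributions to converge to the correct quantity; this is the point at which matching only the first two moments suffices. Combining with $\bar{Y}_j\to{\rm E}_T[Y\mid Z=j]$ from the previous step and a further Slutsky step to replace the unknown center by $\bar{Y}_j$, I obtain $\hat{\sigma}_{Rj}^2\to\sigma_{Rj}^2$ and hence $\hat{\sigma}_R^2\to\sigma_R^2$.

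For the limiting normality I would decompose $\hat{\delta}_R-\delta_T=(\bar{Y}_1-{\rm E}_T[Y\mid Z=1])-(\bar{Y}_0-{\rm E}_T[Y\mid Z=0])$ and apply a Lindeberg--Feller central limit theorem to each centered weighted sum. The summands are independent across trials, as assumed in Section \ref{sec-accessible-info}, and within a trial the reconstructed draws are i.i.d.; boundedness of the weights together with finite outcome variance verifies the Lindeberg condition. After normalizing by $\sqrt{\tilde{N}_j}$ the two group statistics are asymptotically normal and mutually independent, so their difference is asymptotically normal with variance $\sigma_R^2$. Substituting the consistent $\hat{\sigma}_{Rj}^2$ of the previous paragraph and invoking Slutsky gives the standard-normal limit of the studentized ratio.

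The main obstacle is the reconstruction layer: the source outcomes $\hat{Y}_{ki}^*$ are not the true outcomes, coinciding with them only through the second moment (Proposition \ref{prop1}) and generated with the estimated $\hat{\bbe}_{\cM}$. The argument therefore rests on two facts that I would need to make precise. First, because $\delta_T$ and $\sigma_R^2$ are functionals of at most the first two moments of the outcome, the uncontrolled higher moments of the reconstructed data are irrelevant. Second, because the density ratio enters the estimating sums only linearly and is bounded by (C2), second-moment control translates directly into the Lindeberg condition. The delicate step is verifying that the plug-in $\hat{\bbe}_{\cM}$, consistent as $K\to\infty$, does not contribute to the asymptotic variance in the regime $N_j\to\infty$; this is where the interplay between the density-ratio nonlinearity and the moment-only guarantee of Proposition \ref{prop1} must be handled with the most care.
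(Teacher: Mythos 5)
Your proposal follows essentially the same route as the paper's proof: the same ratio decomposition of $\bar{Y}_j$, the law of large numbers applied to the weighted numerator and denominator, the importance-weighting identity under (A2) to pass from ${\rm E}_{\cS}$ to ${\rm E}_T$ (the paper packages the identification step ${\rm E}_T[Y\mid Z=1]-{\rm E}_T[Y\mid Z=0]=\delta_T$ as a separate exchangeability lemma using (A1)), the continuous mapping theorem for $\hat{\sigma}_{Rj}^2$, and a Lindeberg argument with bounded weights for the normal limit. The ``delicate step'' you flag---that the plug-in $\hat{\bbe}_{\cM}$ is only consistent as $K\to\infty$ while the theorem's asymptotics are in $N_j$---is a real subtlety that the paper's own proof also leaves unaddressed, so your treatment is no less complete than the original.
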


The estimator $\hat{\delta}_R$ is consistent for $\delta_T$ under appropriate conditions and converges to a normal distribution when the numbers of participants in both the treatment and control groups are sufficiently large. 
Notably, it is not necessary for the number of participants in each trial $S \in \cS$ to increase indefinitely (i.e., $n_{Sj} \to \infty$ for all $S$). Rather, it suffices for the total number of participants across $\cS$ to grow without bound (i.e., $N_j \to \infty$).
Thus, even if the number of participants in the target trial $T$ is insufficient, consistency and asymptotic normality still hold as long as the number of participants in the other trials is sufficiently large.

The following theorem shows that the estimator $\hat{\delta}_R$ based on the source population $\cS$ achieves a smaller variance than the estimator $\hat{\delta}_T$ based solely on the target trial $T$.

\begin{theorem}\label{th2}
    Under assumptions (A1) and (A2) and conditions (C1) and (C2),
    \begin{align}\label{mse-ratio}
        \frac{{\rm V}[\hat{\delta}_{R}]}{{\rm V}[\hat{\delta}_{T}]} = O\left( \frac{n_T}{\tilde{N}} \right) \quad \text{and} \quad \frac{{\rm V}[\hat{\delta}_{R}]}{{\rm V}[\hat{\delta}_{T}]} = O\left(K^{-1}\right).
    \end{align}
    Moreover, if $w_{S_k i} = 0$ for all $k=1,\dots,K$ and $w_{Ti}=1$, then ${\rm V}[\hat{\delta}_{R}] / {\rm V}[\hat{\delta}_{T}] = 1$.
\end{theorem}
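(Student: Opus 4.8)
The plan is to compare the two variance expressions group by group, reducing the problem to a comparison of effective sample sizes. Both variances share the structure (group variance)/(effective sample size): from the excerpt ${\rm V}[\hat\delta_T]=n_{T1}^{-1}\sigma_{T1}^2+n_{T0}^{-1}\sigma_{T0}^2$, whereas ${\rm V}[\hat\delta_R]=\tilde N_1^{-1}\sigma_{R1}^2+\tilde N_0^{-1}\sigma_{R0}^2$ with $\sigma_{Rj}^2$ itself carrying a $\sum w^2$ in its numerator. Writing $\mu_j:={\rm E}_T[Y\mid Z=j]$, the first step is to rewrite, for each group $j$,
\[
\tilde N_j^{-1}\sigma_{Rj}^2=\frac{\sum_{S\in\cS}\sum_i w_{Si}^2\,\mathbbm{1}(z_{Si}=j)\,(\hat Y_{Si}-\mu_j)^2}{\big(\sum_{S\in\cS}\sum_i w_{Si}\,\mathbbm{1}(z_{Si}=j)\big)^2},
\]
which is exactly the variance of a weighted average. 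Factoring it as (a $w^2$-weighted mean of squared deviations)$\,\times\,\sum w^2/(\sum w)^2$, the first factor is $O(1)$: it converges to a finite positive multiple of $\sigma_{Tj}^2$ under covariate shift (A2) together with the second-moment restorability of Proposition~\ref{prop1}, which is precisely the content already used in Theorem~\ref{th1}.

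The decisive inequality then comes from the boundedness condition (C2). Since $\sup_i w_{Si}\le C<\infty$, we have $w_{Si}^2\le C\,w_{Si}$, hence $\sum w^2\le C\sum w=C\tilde N_j$ and therefore $\sum w^2/(\sum w)^2\le C/\tilde N_j$. This yields $\tilde N_j^{-1}\sigma_{Rj}^2=O(\tilde N_j^{-1})$, so ${\rm V}[\hat\delta_R]=O(\tilde N^{-1})$. Assuming the target group variances $\sigma_{Tj}^2$ are bounded away from $0$ and $\infty$ and the group-allocation proportions are fixed, ${\rm V}[\hat\delta_T]=\Theta(n_T^{-1})$. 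Dividing the two bounds gives ${\rm V}[\hat\delta_R]/{\rm V}[\hat\delta_T]=O(n_T/\tilde N)$, the first claim.

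The $O(K^{-1})$ bound follows from the first by a scaling argument: in the regime where $n_T$ is held fixed while trials are added to $\cM$, each source trial $S_k$ with a nonvanishing weighted sample size contributes a bounded-below positive amount to $\tilde N$, so $\tilde N=\Omega(K)$ and hence $n_T/\tilde N=O(K^{-1})$. For the degenerate case I would substitute $w_{S_k i}=0$ and $w_{Ti}=1$ directly into the definitions: then $\tilde n_{S_k j}=0$, $\tilde n_{Tj}=n_{Tj}$, $\tilde N_j=n_{Tj}$, and $\bar Y_j=\bar y_{Tj}$, so $\hat\delta_R=\hat\delta_T$; moreover $\sigma_{Rj}^2$ collapses to $\sigma_{Tj}^2$, giving ${\rm V}[\hat\delta_R]={\rm V}[\hat\delta_T]$ and ratio exactly $1$.

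I expect the main obstacle to be the claim in the first paragraph that the $w^2$-weighted mean $\sum w^2(\hat Y_{Si}-\mu_j)^2/\sum w^2$ remains bounded and of order $\sigma_{Tj}^2$. For the target contributions this is immediate, but for the reconstructed source contributions it requires that the outcome model \eqref{eq-ripd} reproduce the correct second moments (Proposition~\ref{prop1}) and that covariate shift (A2) force those moments to equal the target-population conditional variances; crucially, only finiteness and boundedness—not exact convergence—are needed for the big-$O$ statement, which keeps this step manageable. Everything else reduces to bounded-weight bookkeeping and fixed-proportion order counting.
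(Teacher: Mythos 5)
Your proposal is correct and follows essentially the same route as the paper: both reduce the ratio to a comparison of effective sample sizes, bounding ${\rm V}[\hat{\delta}_R]=O(\tilde{N}^{-1})$ via the boundedness of the weights in (C2), treating the group variances as bounded constants, and obtaining the $O(K^{-1})$ rate from $\tilde{N}$ growing linearly in $K$, with the degenerate case handled by direct substitution. Your use of $w_{Si}^2\le C\,w_{Si}$ to show $\sum w^2/(\sum w)^2\le C/\tilde N_j$ is in fact slightly more explicit than the paper about why $\sigma_{Rj}^2$ stays $O(1)$, but it is the same underlying bookkeeping.
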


\begin{proof}[Proof of Theorems \ref{th1} and \ref{th2}]
    The proofs are provided in Appendix \ref{app}.
\end{proof}

From Theorem \ref{th2}, if a sufficiently large number of subjects in population $\cS$ is available, the variance of the estimator obtained using InMASS will be smaller than that of the estimator based solely on the target trial $T$. This indicates that even when it is difficult to increase the number of participants in $T$, InMASS can estimate $\delta_T$ more efficiently than $\hat{\delta}_T$ by increasing either the number of meta-analyses $K$ or the number of participants within $\cM$. It is noted that the variance comparison in Theorem \ref{th2} depends on the ratio of $\tilde{N}$ to $n_T$, rather than on the total number of participants $N$ in $\cS$. This excludes special cases where only participants with density ratios $w_{Si}=0$ are increased. Additionally, from the consistency result in Theorem \ref{th1}, we have 
\[
\frac{{\rm MSE}(\hat{\delta}_{R})}{{\rm MSE}(\hat{\delta}_{T})} \to \frac{{\rm V}[\hat{\delta}_{R}]}{{\rm V}[\hat{\delta}_{T}]}
\]
in probability as $N_j \to \infty$ for $j=0,1$.

\subsection{Estimating the weights of density ratio}
The density ratio $w_{Si}$ is unknown in practice and is estimated by various methods. Common approaches include logistic regression, minimization of the Kullback-Leibler divergence, and least-squares fitting \citep{qin1998inferences, cheng2004semiparametric, kanamori2009least}. In this study, we apply logistic regression, a widely used approach in causal inference for estimating propensity scores. We assume the following logistic regression model:
\begin{align}
\label{eq-logit}
    \log \pi(\X) &= \log \frac{P(s \in T \mid \X)}{P(s \in \cS \mid \X)} = \X \bal.
\end{align}
Then, the density ratio is estimated as 
\begin{align*}
    \hat{w}(\X_{Si}^*) &= \frac{N}{n_T} \hat{\pi}(\X_{Si}^*),
\end{align*}
where $\hat{\pi}(\cdot)$ is the estimator obtained from the logistic model \eqref{eq-logit}. While this type of density ratio estimation is commonly applied in fields such as domain adaptation and sample selection bias correction \citep{daume2006domain, cuddeback2004detecting}, it is noted that the estimate is affected by the ratio of the number of subjects between the target trial and the meta-analysis trials.

\subsection{Extend to weighted regression for CATE}
$\hat{\delta}_R$ and $\hat{\sigma}_R^2$ are formally univariate versions of weighted regression analysis and can be extended to the following weighted linear regression model:
\begin{align}
\label{eq-reg-model}
    \hat{\Y}_{\cS} = \X_{\cS} \bbe_{T} + \bep_{\cS},
\end{align}
where $\hat{\Y}_{\cS} = (y_{T1}, \dots, y_{Tn_T}, \hat{Y}_{11}^*, \dots, \hat{Y}_{Kn_K}^*)^T$ is the $N \times 1$ outcome vector with $N = n_T + \sum_{k=1}^K n_{S_k}$. 
The design matrix $\X_{\cS}$ is an $N \times (p+2)$ matrix given by $\X_{\cS} = (\mathbf{1}, \z, \x)$, where $\mathbf{1} = (1, \dots, 1)^T$ is an $N \times 1$ vector, $\z = (z_{T1}, \dots, z_{Tn_T}, z_{S_1 1}, \dots, z_{S_K n_{S_K}})^T$ is an $N \times 1$ indicator vector (with $z_{Si} = 1$ for treatment and $z_{Si} = 0$ for control), $\x = (\x_{T1}, \dots, \x_{Tn_T}, \x_{S_1 1}, \dots, \x_{S_K n_{S_K}})^T$ is an $N \times p$ covariate matrix.
The error vector $\bep_{\cS} = (\ep_{T1}, \dots, \ep_{Tn_T}, \ep_{S_11}, \dots, \ep_{S_K n_{S_K}})^T$ is assumed to be independent of $\X_{\cS}$ and its variance-covariance matrix is denoted by $\bSi_{\ep}$.
The parameter vector is $\bbe_{T} = (\beta_0, \delta_T, \beta_1, \dots, \beta_p)^T$, where $\delta_T$ denotes the group difference of interest. 
For each subject $i$ in a trial $S \in \cS$, $\x_{Si}$ is the $p \times 1$ covariate vector, and the error term $\ep_{Si}$ is independently distributed with mean $0$ and variance $\sigma_S^2$. Note that while the error terms are independent across subjects, they are not identically distributed due to the heterogeneity in variances across trials.

In this setup, $\bbe_{T}$ is estimated by weighted least squares as
\[
\hat{\bbe}_T = (\X_{\cS}^T \W \X_{\cS})^{-1} \X_{\cS}^T \W \hat{\Y}_{\cS},
\]
where $\W = \diag(w_{T1}, \dots, w_{Tn_T}, w_{S_1 1}, \dots, w_{S_K n_{S_K}})$ is the $N \times N$ weight matrix with density ratios on the diagonal. The variance-covariance matrix of $\hat{\bbe}_T$ is computed as
\[
\bSi_{\hat{\bbe}_T} = (\X_{\cS}^T \W \X_{\cS})^{-1} \Bigl(\X_{\cS}^T \W \bSi_{\ep} \W \X_{\cS}\Bigr) (\X_{\cS}^T \W \X_{\cS})^{-1}.
\]
Since $\bSi_{\ep}$ is unknown in practice, we apply the heteroskedasticity-consistent covariance matrix estimator proposed by \cite{white1980heteroskedasticity} under the covariate shift assumption, namely,
\[
\hat{\bSi}_{\ep} = \diag\Bigl(\W^2 (\hat{\Y}_{\cS} - \X_{\cS} \hat{\bbe}_T)^2\Bigr).
\]
For the consistency of $\hat{\bbe}_T$, we assume the following regularity condition:
\begin{enumerate}
    \item[(C3)] $\frac{1}{N} (\X_{\cS}^T \W \X_{\cS})^{-1} \to \Q \quad \text{in probability as } N \to \infty$, where $\Q$ is a $(p+2) \times (p+2)$ positive-definite matrix.
\end{enumerate}

\begin{theorem}
\label{th-central-limit-theorem}
    Under assumptions (A1) and (A2) and conditions (C1) and (C2), if the model \eqref{eq-reg-model} is correctly specified, then
    \[
    {\rm E}_{\cS}[\hat{\bbe}_T] = \bbe_T.
    \]
    Furthermore, if condition (C3) holds, then the variance estimator satisfies $\hat{\bSi}_{\hat{\bbe}_T} \to \bSi_{\hat{\bbe}_T}$ in probability, and $\sqrt{N} (\hat{\bbe}_T - \bbe_T)$ converges in distribution to $N\bigl(\mathbf{0}, \bSi_{\hat{\bbe}_T}\bigr)$ as $N \to \infty$.
\end{theorem}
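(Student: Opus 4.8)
The plan is to treat $\hat{\bbe}_T$ as a weighted least-squares M-estimator and split the argument into three pieces: unbiasedness, asymptotic normality, and consistency of the White sandwich estimator. I would begin by substituting the model \eqref{eq-reg-model} into the weighted-least-squares solution to obtain the exact identity
\[
\hat{\bbe}_T - \bbe_T = (\X_{\cS}^T \W \X_{\cS})^{-1} \X_{\cS}^T \W \bep_{\cS}.
\]
Because the design block for the target trial carries the genuine model errors while the meta-analytic block carries the reconstruction residuals $\hat{Y}_{ki}^* - \X_{ki}^*\bbe_T$, I would first record the structural fact supplied by (A2): covariate shift forces $P_T(Y\mid\X,Z)=P_{\cM}(Y\mid\X,Z)$, so the conditional-mean coefficient is common, $\bbe_{\cM}=\bbe_T$. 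By Proposition \ref{prop1} together with the unbiasedness ${\rm E}[\hat{\bbe}_{\cM}]=\bbe_{\cM}$, each coordinate of $\bep_{\cS}$ then has mean zero and is uncorrelated with the design. Taking expectations gives ${\rm E}_{\cS}[\hat{\bbe}_T]=\bbe_T$ directly.

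For asymptotic normality I would rescale and write
\[
\sqrt{N}(\hat{\bbe}_T-\bbe_T)=\Bigl(\tfrac{1}{N}\X_{\cS}^T\W\X_{\cS}\Bigr)^{-1}\,\tfrac{1}{\sqrt{N}}\,\X_{\cS}^T\W\bep_{\cS},
\]
and control the two factors separately. The first factor converges in probability to a fixed positive-definite matrix by condition (C3). The second factor is a sum of row contributions $w_{Si}\x_{Si}\ep_{Si}$, to which I would apply a Lindeberg--Feller central limit theorem for triangular arrays of independent, non-identically distributed summands. Here (C2) does the essential work: boundedness of the density ratios caps the weights uniformly, so that the per-term second moments $w_{Si}^2\sigma_S^2\x_{Si}\x_{Si}^T$ aggregate to the finite limiting matrix $\B=\lim N^{-1}\X_{\cS}^T\W\bSi_{\ep}\W\X_{\cS}$ and, combined with the finite trial variances, let me verify the Lindeberg condition. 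Slutsky's theorem then yields $\sqrt{N}(\hat{\bbe}_T-\bbe_T)\xrightarrow{d}N(\mathbf{0},\bSi_{\hat{\bbe}_T})$ with the stated sandwich form.

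To establish $\hat{\bSi}_{\hat{\bbe}_T}\to\bSi_{\hat{\bbe}_T}$ I would show that the White plug-in $\hat{\bSi}_{\ep}=\diag(\W^2(\hat{\Y}_{\cS}-\X_{\cS}\hat{\bbe}_T)^2)$ reproduces the middle factor $\X_{\cS}^T\W\bSi_{\ep}\W\X_{\cS}$ in the limit. The standard device is to replace the residuals $\hat{Y}_{Si}-\x_{Si}^T\hat{\bbe}_T$ by the true errors $\ep_{Si}$, paying a cross term that vanishes because $\hat{\bbe}_T-\bbe_T=O_p(N^{-1/2})$, and then invoke a weak law of large numbers for the weighted squared errors, again using (C2) to keep the weighted summands integrable and (C3) for the normalization. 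Outer multiplication by the converging Gram factors then gives the claim.

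The step I expect to be the genuine obstacle is the dependence that reconstruction secretly injects into $\bep_{\cS}$: every reconstructed outcome in the meta-analytic block shares the common estimate $\hat{\bbe}_{\cM}$, so the residuals there are not independent and the naive Lindeberg--Feller hypothesis fails. I would handle this by decomposing the meta-analytic error as $\ep_{ki}^{*}+\X_{ki}^{*}(\hat{\bbe}_{\cM}-\bbe_{\cM})$ and arguing that the second, correlated component is asymptotically negligible: since ${\rm V}[\hat{\bbe}_{\cM}]=\Si_{\hat{\bbe}_{\cM}}\to\mathbf{0}$, this shared perturbation contributes a lower-order term to $N^{-1/2}\X_{\cS}^T\W\bep_{\cS}$, leaving the genuinely independent resampling errors $\ep_{ki}^{*}$ and target errors $\ep_{Ti}$ to drive the limit. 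Making this negligibility rigorous—quantifying the joint growth of $K$ and $N$ under which the shared term is $o_p(1)$ after rescaling, or equivalently justifying that the expectation may be taken conditionally on the meta-analysis—is where the care is required.
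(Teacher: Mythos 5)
Your proposal follows the same route as the paper's own proof: the identity $\hat{\bbe}_T - \bbe_T = (\X_{\cS}^T \W \X_{\cS})^{-1}\X_{\cS}^T\W\bep_{\cS}$, unbiasedness from mean-zero errors independent of the design, condition (C3) plus a Lindeberg-type CLT and Slutsky for the limit law, and the standard residual-replacement argument (with the cross term $O_p(N^{-1/2})$) for consistency of the White sandwich estimator. On these three pieces you and the paper are essentially identical; the paper invokes Corollary 2.7.1 and Theorem 2.3.3 of Lehmann where you invoke Lindeberg--Feller and Slutsky by name, and it uses the covariate-shift identity ${\rm E}_{\cS}[w_iw_j\ep_i\ep_j]={\rm E}_T[\ep_i\ep_j]$ to pass from the $\cS$-expectation to the target-population sandwich, a step you fold into your limiting matrix $\B$ without spelling out.

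Where you genuinely diverge is your final paragraph, and it is to your credit. The paper's proof never confronts the dependence that reconstruction injects: it simply declares, as part of the model \eqref{eq-reg-model} being ``correctly specified,'' that $\bep_{\cS}$ is independent of $\X_{\cS}$ with independent components across subjects, and then takes expectations and applies the CLT as if the meta-analytic block consisted of genuinely independent draws. In fact every reconstructed outcome carries the common factor $\hat{\bbe}_{\cM}$, so the residuals relative to $\bbe_T$ in that block are $\ep_{ki}^* + \X_{ki}^*(\hat{\bbe}_{\cM}-\bbe_{\cM})$, which are mutually correlated and correlated with the design; the paper's unbiasedness computation and its Lindeberg step both silently assume this term away. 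Your proposed repair --- conditioning on $\hat{\bbe}_{\cM}$ and showing the shared perturbation is negligible because $\Si_{\hat{\bbe}_{\cM}}\to\mathbf{0}$ as $K\to\infty$ --- is the right idea, but note that it requires $K\to\infty$ jointly with $N\to\infty$ at a rate making $\sqrt{N}\,\|\hat{\bbe}_{\cM}-\bbe_{\cM}\|=o_p(1)$, a regime neither the theorem statement nor the paper's proof imposes. So your attempt is not missing anything the paper supplies; rather, it correctly identifies a gap that the paper closes only by assumption.
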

\begin{proof}[Proof of Theorem \ref{th-central-limit-theorem}]
    The proofs of these results are provided in Appendix \ref{app}.
\end{proof}

This extension allows for considering relationships between variables, similar to standard linear regression, compared to the univariate case. It enables adjustment for known confounding factors and elucidates which variables have a greater impact on the outcome based on their estimated coefficients. However, increasing the number of variables increases the complexity of the model, which may result in overfitting or multicollinearity issues.
Although this is a general problem in regression analysis, Theorem \ref{th-central-limit-theorem} provides the asymptotic distribution, thereby enabling model selection using criteria such as the Akaike Information Criterion \citep{akaike1973information}.

\section{Numerical studies}\label{sec-num-result}
In Section \ref{sec-simulation}, we evaluate the performance of InMASS through simulations. In Section \ref{sec-case-study}, we apply InMASS to a trial replicated from an existing meta-analysis.

\subsection{Simulation}\label{sec-simulation}
The simulation compares the estimation of $\delta_T$ using only the target trial $T$ with the estimation using InMASS, which is based on the IPD of $T$ and the AD of $\cM$. We assume the following individual-level model for participants $i$ in trials $S_k \in \cM$, for $k = 1, \dots, K$, used in the meta-analysis:
\begin{align}\label{eq-sim-model}
    Y_{ki} &= \beta_0 + \delta_T z_{ki} + \beta_1 x_{1ki} + \beta_2 z_{ki} x_{1ki} + \epsilon_{ki}, \quad k=1,\dots, K,\quad i=1, \dots, n_k,
\end{align}
where $z_{ki}$ is the group indicator (with $z_{ki}=1$ for subjects in the treatment group and $z_{ki}=0$ for those in the control group), $x_{1ki}$ is an observable covariate that is normally distributed with mean $\mu_{1k} = 4(k-1)/(K-1) - 1$ and variance $1$, the error term $\epsilon_{ki}$ for subject $i$ in trial $S_k \in \cS$ follows standard normal distribution, and the parameter values are $(\delta_T, \beta_0, \beta_1, \beta_2) = (2, 1, -1, 0.5)$. The model \eqref{eq-sim-model} represents an ideal case in which the covariates are normally distributed.
In addition to this setting, we consider scenarios to examine whether the claim of Theorem \ref{th1}—that CATE estimation remains valid as long as the second moments are reconstructed, even when the observed covariates deviate from normality—holds numerically. Specifically, we examine two scenarios: one in which the observed covariate is normally distributed and one in which it is skewed. The two scenarios are as follows:
\begin{enumerate}
    \item[(1)] Model \eqref{eq-sim-model} with $x_{1ki} \sim N(\mu_{1k}, 1)$.
    \item[(2)] Model \eqref{eq-sim-model} with $x_{1ki} \sim \chi_2^2 / 2 + \mu_{1k} - 1$.
\end{enumerate}
Here, $\chi_r^2$ denotes a chi-squared random variable with $r$ degrees of freedom.

The number of studies $K$ and the number of participants $n_{S_k}$ are chosen to reflect realistic conditions. Specifically, the number of studies ranges from relatively small to moderately large ($K=5, 10, 30$). The number of participants in each meta-analyzed trial $S \in \cM$ is set to be the integer part of a value sampled from a uniform distribution $U(n, 4n)$ with $n=20, 40, 100$, and the participants are evenly split between the treatment and control groups. The individual participant model for the target trial $T$ is assumed to be essentially the same as model \eqref{eq-sim-model}, except that $\mu_{1k} = 0$. This setting ensures that the covariate shift assumption is satisfied.
The target trial considers three allocation scenarios: 1:1 allocation, 3:1 allocation, and a single-arm treatment group. In these scenarios, the numbers of participants in the treatment and control groups are set as $(n_1, n_0) = (n/2, n/2), (3n/4, n/4), (n, 0)$, respectively.

Since the primary interest is the group difference, we focus on the estimators $\hat{\delta}_T$ and $\hat{\delta}_R$ for $\delta_T$. We selected two models for the regression analysis of the target trial and the weighted regression in InMASS. The first model specifies the true model given in \eqref{eq-sim-model}, while the second model is misspecified by including only the intercept and treatment arm as explanatory variables. 
Under the assumptions of model \eqref{eq-sim-model}, the information available for meta-analysis from each group in each trial includes the sample size, the mean and variance of the outcome, and the mean and variance of the covariate $x_{1ki}$, namely, $\{n_{kj}, \bar{y}_{kj}, \bar{x}_{1kj}, \sigma_{\bar{Y}_{kj}}^2, \sigma_{\bar{x}_{1kj}}^2\}_{j=0,1}$.

We evaluate the performance of InMASS by comparing its estimation of group differences to the estimation based solely on information from the target trial under both 1:1 and 3:1 allocation scenarios. InMASS employs two strategies: (1) for the 1:1 allocation, it borrows as much information as possible from $\cM$, and (2) for the 3:1 allocation and a single-arm treatment group, it borrows only the information of the control group from $\cM$. To avoid computational estimation failures due to Fisher information approaching zero, InMASS employs the DerSimonian-Laird method \citep{dersimonian1986meta} instead of the restricted maximum likelihood method for meta-analysis. We estimated density ratios using the logistic regression model \eqref{eq-logit}, including the intercept, covariates, and their squared terms as explanatory variables. Data generation and analysis in the simulations were performed 10,000 times. We compare the mean squared error (MSE) and the power of each method. The power is computed as the probability that the 95\% confidence interval for $\delta_T$ does not contain the null value $\delta_0$, where $\delta_0$ is varied from $\delta_T$ to $\delta_T + 2$ in increments of 0.1. When $\delta_0 = \delta_T$, this probability represents the type I error rate; for $\delta_0 > \delta_T$, a higher probability indicates greater statistical power.

The MSE for $K=10$ is shown in Figure \ref{fig-mse}, and the results for other scenarios are provided in Appendix Figures \ref{fig-1to1-mse}--\ref{fig-controlAD-mse}. Compared to using only the target trial with 1:1 allocation, InMASS significantly improved the MSE, which supports Theorem \ref{th2}. This result was consistent regardless of model specification and covariate normality, with a particularly large improvement in MSE when the sample size was small.
In the case of 3:1 allocation and a treatment single group, InMASS is applied to borrow subjects from the control group. The MSE for InMASS applied to 3:1 allocation and a single-arm treatment group is comparable to that obtained from a 1:1 allocation trial. When the model is misspecified and the number of subjects is small, InMASS yields a larger MSE than a 3:1 allocation trial alone. However, as the number of subjects increases, the MSE of InMASS approaches that of the 1:1 allocation trial. By allocating $n$ participants with a higher proportion to the treatment group, the overall amount of available information increases, even if subjects can only be borrowed for the control group. As a result, InMASS improves the MSE and is robust to the distribution of covariates.

\begin{figure}[H]
    \centering
    \includegraphics[width=0.9\linewidth]{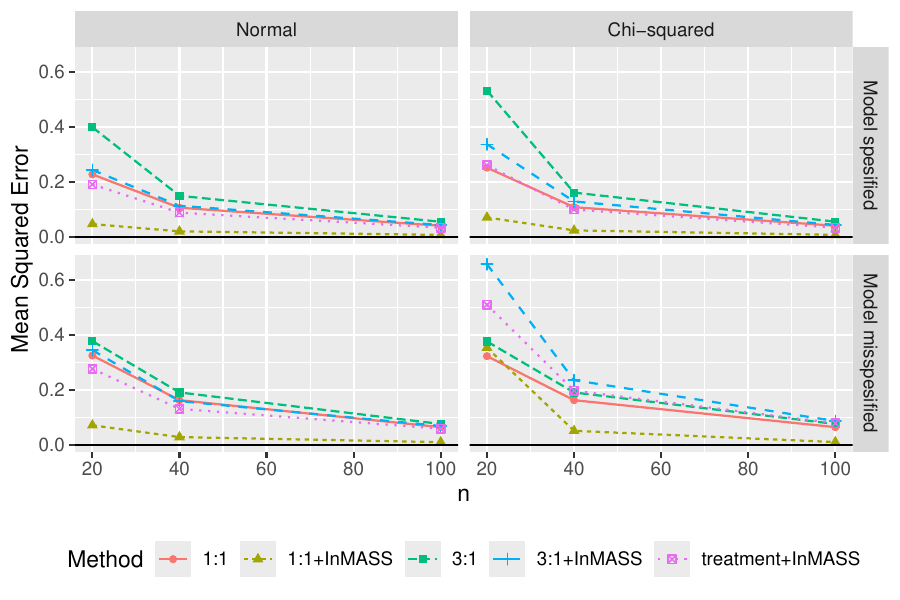}
    \caption{Mean squared error of CATE for the target population with $K=10$.}
    \label{fig-mse}
\end{figure}

Figure \ref{fig-power1} shows the power plot for $\delta_T$ with $K=10$, and the results for the other scenarios are provided in Appendix Figures \ref{fig-1to1-power1}--\ref{fig-controlAD-mid-power2}. As with MSE, InMASS significantly improved power compared to using only the target trial with 1:1 allocation, regardless of the model specification and the covariate distribution. In the case of 3:1 allocation and a single-arm treatment group, InMASS achieved similar power to that of the 1:1 allocation trial, despite borrowing information only from the control group. 
When the model is misspecified, applying InMASS to the control group resulted in achieving a power of 0.8 or 0.9 more quickly than not using InMASS. However, because the InMASS estimator of $\delta_T$ was biased, the 3:1 allocation without InMASS showed higher power than the 1:1 allocation for values of power up to approximately 0.5.
On the other hand, due to the increased variance of the weighted regression under model misspecification, the type I error did not inflate and remained controlled below 0.05.

In summary, InMASS clearly improves both MSE and power by utilizing the AD from meta-analysis. In cases such as 3:1 allocation or single-arm trials—where only control group information is borrowed—increased overall information leads to improved MSE and power, while the type I error rate remains controlled even when the model is misspecified and the covariate follows a non-normal distribution.

\begin{figure}[H]
    \centering
    \includegraphics[width=0.8\linewidth]{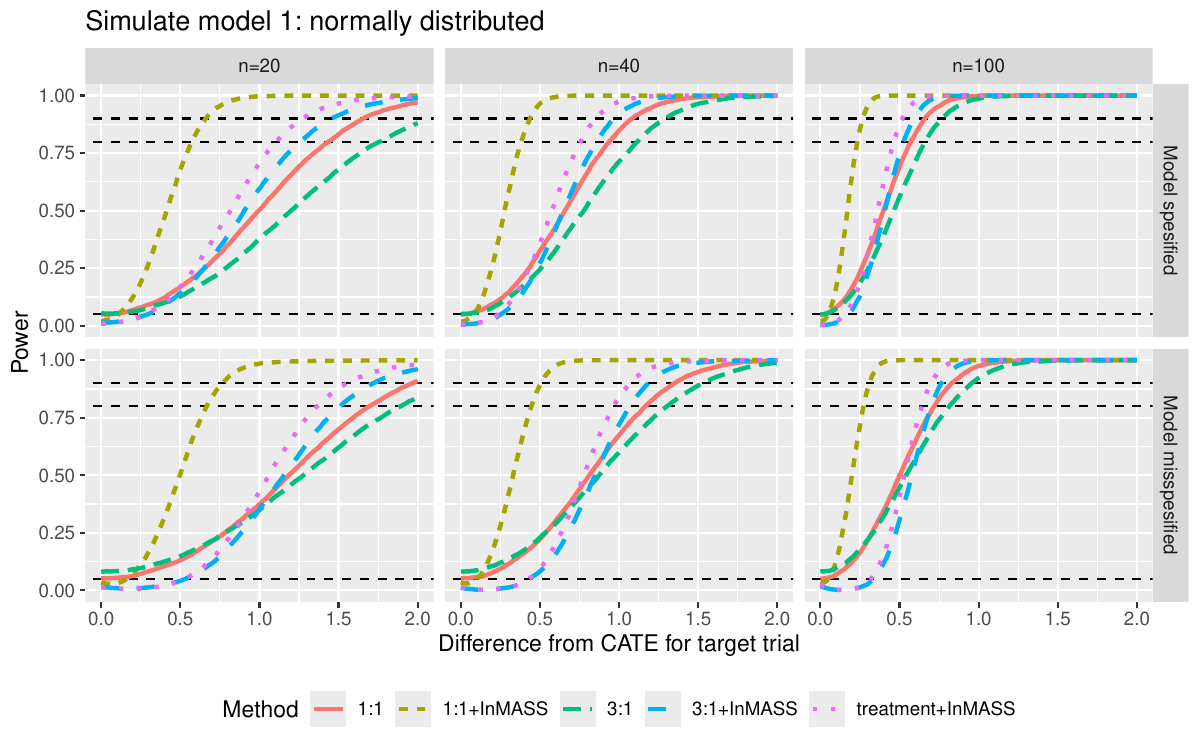}
    \includegraphics[width=0.8\linewidth]{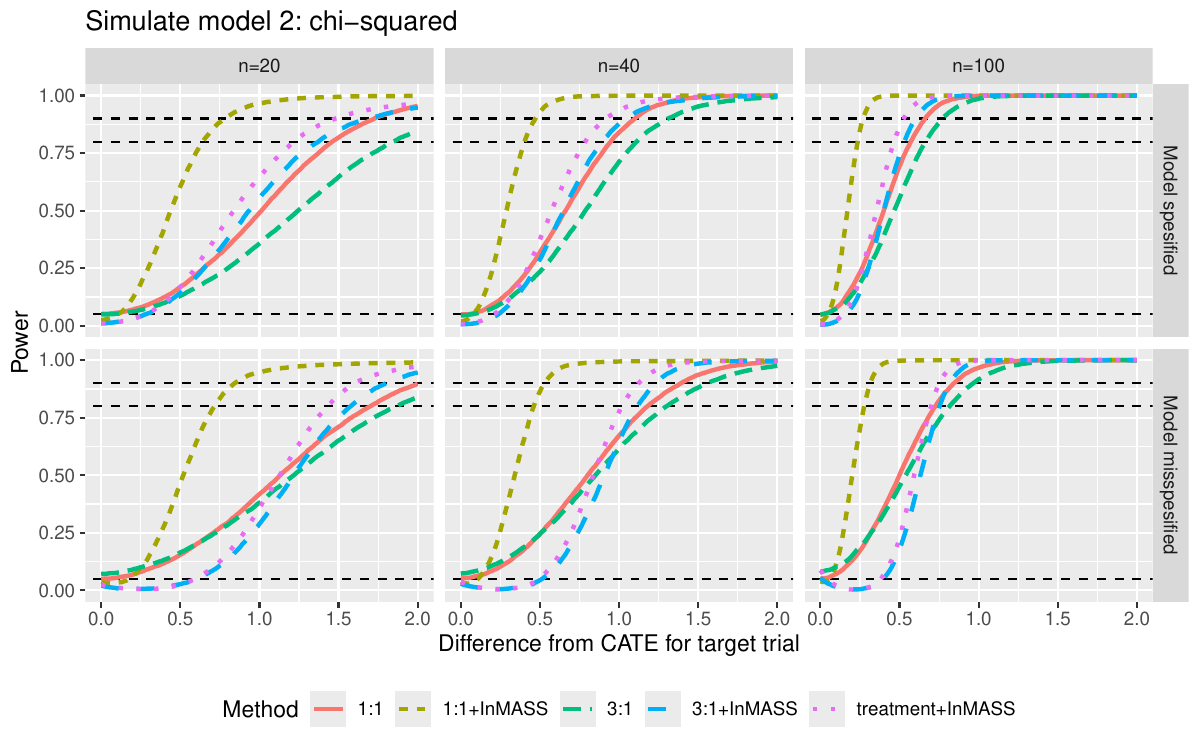}
    \caption{Power of each method for simulation model \eqref{eq-sim-model} with $K=10$.}
    \label{fig-power1}
\end{figure}

\subsection{Case study}\label{sec-case-study}
As an application of InMASS, we present an example in which meta-analysis is used to supplement the sample size in a small-sample trial. Here, we consider the meta-analysis by \cite{sanguankeo2015effects}, which conducted a meta-analysis of five trials and reported the mean difference in annual estimated glomerular filtration rate (eGFR) change between the Statin and Control groups for a particular outcome. The meta-analysis also reported baseline eGFR and the number of participants for each trial. The outcome and covariate data used in this study are summarized in Table \ref{tab-rda-meta}. In this study, we selected Sawara 2008 as the target trial. We simulated a hypothetical clinical trial based on the settings of the target trial and considered borrowing participants from the meta-analysis of the remaining four trials for analysis. The target trial was relatively recent among the five trials and had a large estimated treatment effect; however, due to its small sample size, the result was judged as showing no significant group difference. Thus, we considered a scenario in which a sufficient sample size is ensured by borrowing participants from the other four trials to detect the treatment effect.

Since the meta-analysis by \cite{sanguankeo2015effects} did not provide the eGFR changes for the treatment and control groups, we made some assumptions. For the control group, the mean change was assumed to be $-\text{(follow up)}/12$, and the variance was assumed to be $n_0 \text(SE)^2/(n_0+n_1)$. For the treatment group, the mean change was assumed to be $-\text{(follow up)}/12 + \text{(total change in eGFR)}$, and the variance was assumed to be $n_1 \text(SE)^2/(n_0+n_1)$. The IPD for the target trial was reconstructed by assuming that the outcomes and covariates follow normal distributions based on the total change in eGFR and the baseline eGFR of the target trial.

The following three sample size settings were then applied:
\begin{enumerate}
    \item \textbf{Target}: The same sample size as the target trial.
    \item \textbf{Target (2:1)}: A 2:1 allocation with 100 participants in the treatment group and 50 participants in the control group.
    \item \textbf{Single-arm}: 100 participants in the treatment group and 0 participants in the control group.
\end{enumerate}
Under these settings, we compared the outcomes estimated using only the target trial with those estimated by borrowing results from the meta-analysis using InMASS. In the Target setting, InMASS was applied to both the treatment and control groups, whereas in the Target (2:1) and Single-arm settings, it was applied only to the control group.

In the meta-analysis of the four trials, the regression parameters $\bbe = (\beta_0, \beta_1, \beta_2)$ were estimated using the model 
\[
\text{Total change in eGFR} = \beta_0 + \beta_1\,(\text{arm}) + \beta_2\,(\text{Baseline eGFR}) + \text{error}.
\]
The results of the meta-analysis for the four trials are shown in Table \ref{tab-rda-meta-res}, and the regression coefficients were estimated as $\hat{\bbe} = (-3.60, 1.42, 0.01)$. Using these results, the total change in eGFR for the four trials was reconstructed based on equation \eqref{eq-ripd}. Although the follow-up period is known to influence the outcome, it was not considered in this study to simplify the discussion.

The results are presented in Table \ref{tab-rda-result}. In the Target setting, the power was insufficient, resulting in a confidence interval that included zero. However, in the Target with InMASS setting, a sufficient number of participants was ensured, leading to a narrower confidence interval. In the Target (2:1) setting, a significant difference was demonstrated, and by using InMASS control, the difference in the outcome of interest could be evaluated with adequate power. In the single-arm setting, comparisons with the control group are challenging. However, by using InMASS control, the estimated outcome difference approached those observed in the Target (2:1) and original Target settings, and sufficiently narrow confidence intervals were obtained. These results indicate that InMASS can supplement an insufficient number of participants in the target trial. This approach achieves higher inferential power compared to relying solely on the target trial, while ensuring that the borrowed participants closely match the target trial setting.

\begin{table}[H]
\centering
\caption{Outcome and covariate of Meta-analysis by \cite{sanguankeo2015effects}.}
\label{tab-rda-meta}
\begin{tabular}{|l|c|c|c|c|cc|}
\hline
             &       &       & Follow up & Total change & \multicolumn{2}{c|}{Baseline eGFR}             \\ \cline{6-7} 
Study        & $n_1$ & $n_0$ & (month)   & in eGFR      & \multicolumn{1}{c|}{Treatment}   & Control     \\ \hline
Yasuda 2004  & 39    & 41    & 12        & -2.0 (0.6)   & \multicolumn{1}{c|}{59.0 (25.6)} & 60.0 (31.2) \\ \hline
Bianchi 2003 & 28    & 28    & 12        & 4.6 (0.2)    & \multicolumn{1}{c|}{50.8 (10.1)} & 50.0 (9.5)  \\ \hline
Rahman 2008  & 779   & 778   & 58        & 0.9 (0.7)    & \multicolumn{1}{c|}{50.8 (8.4)}  & 50.6 (8.2)  \\ \hline
Koren 2009   & 286   & 293   & 54        & 2.1 (0.2)    & \multicolumn{1}{c|}{51.3 (8.5)}  & 51.1 (7.8)  \\ \hline
Sawara 2008  & 22    & 16    & 12        & 4.8 (2.7)    & \multicolumn{1}{c|}{50.7 (16.2)} & 57.3 (18.7) \\ \hline
\end{tabular} \\
{\footnotesize Baseline eGFR: mean (SD), Total change in eGFR: mean (SE).}
\end{table}

\begin{table}[H]
    \centering
    \caption{Meta-regression by data of \cite{sanguankeo2015effects} without target trial.}
    \label{tab-rda-meta-res}
    \begin{tabular}{lccc}
      \hline
     & estimates & SE & $95\%$ CI \\ 
     \hline
        intercept & -3.62 & 16.45 & [-35.85, 28.62] \\ 
        treatment & 1.42 & 2.35 & [-3.19, 6.02] \\ 
        baseline eGFR & 0.01 & 0.31 & [-0.59, 0.62] \\ 
        $\tau^2$ & 10.92 & 9.33 & [3.68, 58.16] \\ 
        \hline
    \end{tabular}
\end{table}

\begin{table}[H]
\centering
\caption{Results of a virtual trial mimicking to target trial setting.}
\label{tab-rda-result}
\begin{tabular}{|l|c|c|c|c|c|c|}
\hline
                                & $n_1$ & $n_0$ & Estimate (SE) & $95\%$ CI         & $t$ value & $Pr(>|t|)$        \\ \hline
Target                          & 22    & 16    & 4.0(2.86)     & {[}-1.83, 9.79{]} & 1.39      & 0.1729            \\ \hline
Target with InMASS              & 22    & 16    & 1.5(0.18)     & {[}1.12, 1.85{]}  & 8.03      & \textless{}0.0001 \\ \hline
Target(2:1)                     & 100   & 50    & 4.8(1.58)     & {[}1.72, 7.95{]}  & 3.07      & 0.0026            \\ \hline
Target(2:1) with InMASS control & 100   & 50    & 6.2(1.99)     & {[}2.27, 10.06{]} & 3.10      & 0.0019            \\ \hline
Single-arm                      & 100   & 0     & NC            & NC                & NC        & NC                \\ \hline
Single-arm with InMASS control  & 100   & 0     & 6.3(1.96)     & {[}2.46, 10.16{]} & 3.21      & 0.0013            \\ \hline
\end{tabular}\\
{\footnotesize CI: confidence interval, IPD: Individual Participant Data, NC: Not Computable, SE: Standard Error.}
\end{table}

\section{Discussion}\label{sec-conclusion}
This study proposes a novel approach, called InMASS, for borrowing participants from AD meta-analysis to ensure sufficient sample sizes in trials with inadequate sample sizes. Unlike existing methods for historical controls—which estimate the CATE for the general or source population—InMASS estimates the CATE specifically for the target trial population. This makes the method particularly useful in scenarios where the population of interest is predefined, such as when conducting a new trial.

The properties demonstrated in this study highlight the utility of the proposed method under certain assumptions and conditions. Proposition \ref{prop1} shows that when the means and variances of covariates are consistently reconstructed from AD, the reconstructed covariates and outcomes obtained from the meta-analytic regression model match the true IPD up to the second moment. This ensures that sufficient IPD for CATE estimation can be reconstructed from the means and variances commonly reported in meta-analyses and research articles. 
Theorem \ref{th1} guarantees the consistency and asymptotic normality of the estimated CATE using the reconstructed outcomes and covariates. These properties depend on an increase in the total number of participants in the source population, rather than on an increase in the number of participants in every individual trial. This is particularly beneficial when increasing the number of participants in the target trial is challenging due to ethical or economic constraints, as it suffices to increase the overall number of participants in the source population. 
Furthermore, Theorem \ref{th2}, valid under the same conditions, suggests that using meta-analysis for CATE estimation yields a smaller variance compared to using only the target trial. When the total number of participants or the number of trials in the source population can be increased, InMASS achieves reduced variance. 
These properties were confirmed through simulation experiments, where InMASS exhibited lower MSE and improved power compared to estimating CATE solely from the target trial. Even in situations where group comparisons are not possible, such as in single-arm trials, InMASS enables such comparisons by borrowing reconstructed IPD from meta-analysis, achieving detection power comparable to that of a 1:1 allocation. In the case study, we demonstrated an example of estimating CATE for a target trial by integrating meta-analysis from a realistic number of trials. Even when the required sample size determined during planning could not be achieved using IPD alone, borrowing reconstructed IPD from meta-analysis allowed the necessary sample size to be secured. Moreover, InMASS enables 1:1 allocation to be achieved using reconstructed IPD, even when the allocation ratios in the target trial are skewed toward the treatment group. This makes the method particularly useful in cases where allocation to the control group is ethically unacceptable.

This study proposes a novel method for efficiently estimating CATE by transferring multiple trials—where only summary statistics are available—to the target trial while utilizing aggregated baseline information under assumptions (A1) and (A2) and conditions (C1) and (C2). 
Conditional independence (A1) requires that at least one trial included in the meta-analysis involves randomization between the group of interest and the control group. This assumption does not require the target trial itself to be randomized, nor does it preclude the inclusion of observational studies in the meta-analysis. Thus, theoretically, (A1) is satisfied if the meta-analysis includes one randomized controlled trial and a sufficiently large observational study. The allowance for such a relaxed conditional independence (A1) is due to the covariate shift assumption (A2), which assumes that the conditional distribution of outcomes given covariates is the same in the target and source trials. 
Although it is challenging to verify the validity of the assumption (A2) statistically, differences between trials can be accounted for using random effects in the meta-analysis, allowing slight variations in the conditional distribution across trials to be absorbed as random effects. Condition (C1) requires the consistency of the means and variances used in the meta-analysis. Although (C1) is a common assumption in AD meta-analysis, it may not hold for non-collapsible outcomes such as hazard ratios or odds ratios; in such cases, the outcome can be transformed into a collapsible quantity to make the method applicable. In this study, the target of estimation, CATE, is a first moment, and the summary data used in the meta-analysis (means and variances) are sufficient for the analysis. However, if the target of estimation includes medians, survival curves, or entire distributions, information beyond means and variances would be required from the source trials. The assumption of covariate shift also justifies the use of weighted regression based on density ratios, although it may be possible to relax this assumption further using more advanced theories in transfer learning. Condition (C2) requires that the entire source population adequately covers the support of the target trial, ensuring the stability and finite variance of InMASS. Since the target trial itself is included in the source population, this condition is generally satisfied; however, it may not hold if the covariate distributions of the source and target trials are extremely different, in which case estimation should be performed using only the target trial without leveraging meta-analysis.

InMASS is particularly useful when the target population of interest is predefined and a meta-analysis has already been conducted, as it enables the estimation of the CATE for the target population by integrating the results of the meta-analysis with those of an additional small-sample trial.

\section*{Acknowledgements}
This work was supported by JSPS KAKENHI (Grant numbers JP24K23862).
\vspace*{-8pt}

\section*{Supplementary Material}
The R code used for the simulation and case study, together with the function required to run InMASS, is publicly available at: \url{https://github.com/keisuke-hanada/InMASS}.
\vspace*{-8pt}

\appendix
\section{Appendix}\label{app}

\subsection{Proofs}

\begin{proof}[Proof of Proposition \ref{prop1}]
First, suppose that $F_{\X}$ is known and that $X_{ki}^*$, for $i=1,\dots,n_k$, are independently sampled from $F_{\X}$. In this case, 
\[
{\rm E}[X_{ki}^* \mid \hat{\bbe}_{\cM}] = \bmu_k \quad \text{and} \quad {\rm V}[X_{ki}^* \mid \hat{\bbe}_{\cM}] = \bsi_{\bmu_k}^2 \I_p.
\]
Therefore, the mean and variance of $\hat{Y}_{ki}^*$ are given by
\begin{align*}
    {\rm E}[\hat{Y}_{ki}^* \mid \hat{\bbe}_{\cM}] &= {\rm E}[X_{ki}^* \hat{\bbe}_{\cM} + \epsilon_{ki}^* \mid \hat{\bbe}_{\cM}] = \bmu_k \hat{\bbe}_{\cM}, \\
    {\rm V}[\hat{Y}_{ki}^* \mid \hat{\bbe}_{\cM}] &= {\rm V}[X_{ki}^* \hat{\bbe}_{\cM} + \epsilon_{ki}^* \mid \hat{\bbe}_{\cM}] = \sigma_k^2.
\end{align*}

Next, suppose that $F_{\X}$ is unknown; however, assume that condition (C1) holds. In this case, if $X_{ki}^*$, for $i=1,\dots,n_k$, are independently sampled from $\hat{F}_{\X}(\cdot \mid \bar{\x}_k, \hat{\bsi}_{\bar{\x}_k}^2)$, then 
\[
{\rm E}[X_{ki}^* \mid \hat{\bbe}_{\cM}] = \bar{\x}_k \quad \text{and} \quad {\rm V}[X_{ki}^* \mid \hat{\bbe}_{\cM}] = \hat{\bsi}_{\bar{\x}_k}^2 \I_p.
\]
Therefore, the mean and variance of $\hat{Y}_{ki}^*$ are given by
\begin{align*}
    {\rm E}[\hat{Y}_{ki}^* \mid \hat{\bbe}_{\cM}] &= \bar{\x}_k \hat{\bbe}_{\cM} \to \bmu_k \hat{\bbe}_{\cM}, \\
    {\rm V}[\hat{Y}_{ki}^* \mid \hat{\bbe}_{\cM}] &= \hat{\sigma}_k^2 \to \sigma_k^2,
\end{align*}
as $n_k \to \infty$ by condition (C1).
\end{proof}

\begin{lemma}
\label{lemma-exchangability}
    Under the assumptions (A1) and (A2) and the condition (C2), 
    \[
        {\rm E}_{T}[Y_1 - Y_0] = {\rm E}_{T}[Y \mid Z=1] - {\rm E}_{T}[Y \mid Z=0]
    \]
    for the target population $T$.
\end{lemma}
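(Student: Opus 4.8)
The plan is to prove the two one-armed identities ${\rm E}_T[Y_a]={\rm E}_T[Y\mid Z=a]$ for $a=0,1$ and then subtract. The natural first move is to exploit the consistency relation $Y=ZY_1+(1-Z)Y_0$, which gives ${\rm E}_T[Y\mid Z=a]={\rm E}_T[Y_a\mid Z=a]$ for free. The whole lemma therefore collapses to a single \emph{mean-ignorability} statement inside the target population, namely ${\rm E}_T[Y_a\mid Z=a]={\rm E}_T[Y_a]$, and all the content lies in producing this from (A1) and (A2).

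To obtain the ignorability I would condition on covariates and work with the outcome regression $m_a(\x):={\rm E}[Y\mid\X=\x,Z=a]$. By consistency $m_a(\x)={\rm E}[Y_a\mid\X=\x,Z=a]$, and (A2) forces $P_T(Y\mid\X,Z)=P_{\cM}(Y\mid\X,Z)$, so $m_a$ is a single function shared by $T$ and $\cM$, hence by $\cS$. Assumption (A1) then supplies a study $S\in\cS$ with $0<P(Z=1\mid S)<1$; reading its randomization as $Z\indep(\X,Y_1,Y_0)\mid S$ yields $\{Y_1,Y_0\}\indep Z\mid\X,S$, so that ${\rm E}_S[Y_a\mid\X=\x]={\rm E}_S[Y_a\mid\X=\x,Z=a]={\rm E}_S[Y\mid\X=\x,Z=a]=m_a(\x)$. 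Reading (A2) as transportability of the conditional potential-outcome mean lets me carry this identification to the target, ${\rm E}_T[Y_a\mid\X]=m_a(\X)$, after which iterated expectation over the target covariate law gives ${\rm E}_T[Y_a]={\rm E}_T[m_a(\X)]$. The positivity $0<P(Z=1\mid T)<1$ together with the boundedness of the density ratio in (C2) is invoked here to guarantee that the target conditional means and the integral ${\rm E}_T[m_a(\X)]$ are finite and well defined on the covariate support.

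The hard part will be the final closure, namely reconciling ${\rm E}_T[m_a(\X)]$ with the naive conditional mean ${\rm E}_T[Y\mid Z=a]={\rm E}_T[m_a(\X)\mid Z=a]$, since $T$ itself may be single-armed or non-randomized and its assignment mechanism $P(\X,Z)$ is left entirely free by (A2). The crux is that, once $m_a$ has been pinned down in the randomized source study and transported by covariate shift, the only object that can still differ between $T$ and the source is the covariate--treatment law; the argument must show that averaging the common regression $m_a$ against the target covariate distribution removes the confounding, so that the observed contrast ${\rm E}_T[Y\mid Z=1]-{\rm E}_T[Y\mid Z=0]$ indeed recovers ${\rm E}_T[Y_1-Y_0]$. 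Carrying out this transport of conditional independence from $S$ to $T$ carefully, rather than simply postulating ignorability in $T$, is the delicate step; the remainder is routine bookkeeping with iterated expectations and consistency, and subtracting the $a=0$ identity from the $a=1$ identity then delivers the stated equality.
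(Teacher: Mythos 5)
Your reduction is clean as far as it goes --- consistency gives ${\rm E}_T[Y\mid Z=a]={\rm E}_T[Y_a\mid Z=a]$, (A2) makes $m_a(\x)={\rm E}[Y\mid \X=\x,Z=a]$ a single function shared across $\cS$, and (A1) identifies it inside the randomized study $S_0$ with the conditional potential-outcome mean --- but the proposal stops exactly where the proof has to happen. You arrive at ${\rm E}_T[Y_a]={\rm E}_T[m_a(\X)]$ on one side and ${\rm E}_T[Y\mid Z=a]={\rm E}_T[m_a(\X)\mid Z=a]$ on the other, and then announce that reconciling the two is ``the delicate step'' without carrying it out. That step is not routine bookkeeping: ${\rm E}_T[m_a(\X)]$ and ${\rm E}_T[m_a(\X)\mid Z=a]$ average the same function against two different covariate laws, $P_T(\X)$ and $P_T(\X\mid Z=a)$, and they agree in general only if $\X\indep Z$ under $T$ or $m_a$ is constant. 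Nothing in (A1), (A2) or (C2) supplies such a condition for the target trial itself --- (A1) only posits one randomized study somewhere in $\cS$, and (A2) constrains $P(Y\mid\X,Z)$ while leaving $P_T(\X,Z)$ entirely free. So as written the argument has a genuine hole at its crux, and it is not one that ``transporting conditional independence from $S$ to $T$'' can fill, because no assumption transports the assignment mechanism.

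The paper's proof takes a different route that never needs per-arm ignorability inside $T$: it starts from the importance-weighting identity ${\rm E}_T[\,\cdot\,]={\rm E}_{\cS}[w(\X,Z)\,\cdot\,]$ with the \emph{joint} weight $w(\x,z)=dP_T(\x,z)/dP_{\cS}(\x,z)$ (this is where (C2) enters), conditions on $(\X,Z)$ inside $\cS$, uses (A2) to replace the conditional outcome law by that of the randomized study $S_0$, applies (A1) there to turn ${\rm E}_{S_0}[Y_a\mid\X,Z]$ into ${\rm E}_{S_0}[Y\mid\X,Z=a]$, and then maps back to $T$ through the same weight. Because $w$ is a ratio of joint laws of $(\X,Z)$ rather than of $\X$ alone, the target's treatment-assignment mechanism is itself part of what gets reweighted, which is precisely the device your outcome-regression route is missing. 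To salvage your approach you would have to either recast the right-hand side as the self-normalized weighted functional used in the proof of Theorem \ref{th1} or add an explicit unconfoundedness assumption for $T$; neither appears in your proposal, so the proof is incomplete.
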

\begin{proof}[Proof of Lemma \ref{lemma-exchangability}]
    From (A2), for any $A, B \in \cS = \{T, \cM\}$ with $A \ne B$, we have 
    \[
    P_A(Y \mid X, Z) = P_B(Y \mid X, Z).
    \]
    Additionally, by (C2), for any $S \in \cS$, 
    \[
    \sup_{i \in S} \left| \frac{dP_T(x_{Si}, z_{Si})}{dP_{\cS}(x_{Si}, z_{Si})} \right| < \infty.
    \]
    From (A1), there exists some $S_0 \in \cS$ such that 
    \[
    0 < P(Z = 1 \mid S_0) < 1 \quad \text{and} \quad \{Y_1, Y_0\} \indep Z \mid S_0.
    \]
    Thus, the following calculations can be performed:
    \begin{align*}
        {\rm E}_{T}[Y_1 - Y_0] 
        &= {\rm E}_{\cS}\left[ \frac{dP_{T}(X,Z)}{dP_{\cS}(X,Z)} (Y_1 - Y_0) \right] \\
        &= {\rm E}_{\cS}\left[ {\rm E}_{\cS}\left[\frac{dP_{T}(X,Z)}{dP_{\cS}(X,Z)} (Y_1 - Y_0) \mid X, Z \right] \right] \\
        &= {\rm E}_{\cS}\left[ {\rm E}_{S_0}\left[ \frac{dP_{T}(X,Z)}{dP_{\cS}(X,Z)} (Y_1 - Y_0) \mid X, Z \right] \right] \quad (\text{from (A2)}) \\
        &= {\rm E}_{\cS}\left[ \frac{dP_{T}(X,1)}{dP_{\cS}(X,1)} \, {\rm E}_{S_0}[Y \mid X, Z=1] \right] - {\rm E}_{\cS}\left[ \frac{dP_{T}(X,0)}{dP_{\cS}(X,0)} \, {\rm E}_{S_0}[Y \mid X, Z=0] \right] \quad (\text{from (A1)}) \\
        &= {\rm E}_{\cS}\left[ \frac{dP_{T}(X,1)}{dP_{\cS}(X,1)} \, {\rm E}_{\cS}[Y \mid X, Z=1] \right] - {\rm E}_{\cS}\left[ \frac{dP_{T}(X,0)}{dP_{\cS}(X,0)} \, {\rm E}_{\cS}[Y \mid X, Z=0] \right] \quad (\text{from (A2)}) \\
        &= \int_{\cS} \frac{dP_{T}(x,1)}{dP_{\cS}(x,1)} \, y \, dP_{\cS}(x,y,1) - \int_{\cS} \frac{dP_{T}(x,0)}{dP_{\cS}(x,0)} \, y \, dP_{\cS}(x,y,0) \\
        &= {\rm E}_{T}[Y \mid Z=1] - {\rm E}_{T}[Y \mid Z=0].
    \end{align*}
\end{proof}

\begin{proof}[Proof of Theorem \ref{th1}]
Let $\bar{Y}_j$ be the weighted mean of $\hat{Y}_{Sj}$, defined as  
\[
\bar{Y}_j = \frac{\sum_{S \in \cS} \sum_{i=1}^{n_S} w_{Si}\,\mathbbm{1}(z_{Si}=j)\,\hat{Y}_{Si}}{\tilde{N}_j}.
\]
From (A1), $\bar{Y}_j$ is well-defined for $j=0,1$, and $\hat{\delta}_R = \bar{Y}_1 - \bar{Y}_0$. The quantity $\bar{Y}_j$ can be calculated as follows:
\begin{align*}
    \bar{Y}_j &= \frac{N_j^{-1}\sum_{S \in \cS} \sum_{i=1}^{n_S} w_{Si}\,\mathbbm{1}(z_{Si}=j)\,\hat{Y}_{Si}}{N_j^{-1}\tilde{N}_j} \\
    &\to \frac{{\rm E}_{\cS}\bigl[ w(X,Z)Y \mid Z=j \bigr]}{{\rm E}_{\cS}\bigl[ w(X,Z) \mid Z=j \bigr]} \quad (\text{as } N_j \to \infty \text{ and (A1)}) \\
    &= \frac{{\rm E}_{T}\bigl[ Y \mid Z=j \bigr]}{{\rm E}_{T}\bigl[1 \mid Z=j \bigr]}
    = {\rm E}_{T}\bigl[ Y \mid Z=j \bigr].
\end{align*}
Thus, by Lemma \ref{lemma-exchangability}, 
\[
\hat{\delta}_R = \bar{Y}_1 - \bar{Y}_0 \to {\rm E}_{T}\bigl[ Y \mid Z=1 \bigr] - {\rm E}_{T}\bigl[ Y \mid Z=0 \bigr] = {\rm E}_{T}\bigl[ Y_1 - Y_0 \bigr] = \delta_T
\]
as $N_0, N_1 \to \infty$. Since $\hat{\sigma}_{Rj}^2$ is obtained by replacing ${\rm E}_{T}\bigl[ Y \mid Z=j \bigr]$ in $\sigma_{Rj}^2$ with its estimator $\bar{Y}_j$, the continuous mapping theorem implies that $\bar{Y}_j \to {\rm E}_{T}\bigl[ Y \mid Z=j \bigr]$ as $N_j \to \infty$, which in turn yields $\hat{\sigma}_{Rj}^2 \to \sigma_{Rj}^2$. 

Furthermore, by (C2), there exists some 
\[
w_M = \sup_{i \in S,\, S \in \cS} \left| \frac{dP_T(x_{Si}, z_{Si})}{dP_{\cS}(x_{Si}, z_{Si})} \right| < \infty,
\]
and
\begin{align*}
    \frac{1}{N_j} \sigma_{Rj}^2 &\le \frac{N_j \, w_M^2 \max_{S \in \cS} {\rm V}_{\cS}[\hat{Y}_S \mid Z=j]}{N_j^2 \, \bar{w}_j^2} = O\bigl(N_j^{-1}\bigr),
\end{align*}
where $\bar{w}_j = N_j^{-1} \sum_{S \in \cS} \sum_{i=1}^{n_S} w_{Si}\,\mathbbm{1}(z_{Si}=j)$. Therefore,
\[
{\rm V}[\hat{\delta}_R] = \frac{1}{N_1}\sigma_{R1}^2 + \frac{1}{N_0}\sigma_{R0}^2 = O\bigl(N_1^{-1} + N_0^{-1}\bigr),
\]
satisfying the Lindeberg condition. Thus,
\[
\frac{\hat{\delta}_R - \delta_T}{\sqrt{\tilde{N}_1^{-1}\hat{\sigma}_{R1}^2 + \tilde{N}_0^{-1}\hat{\sigma}_{R0}^2}} \to_d N(0,1)
\]
as $N_j \to \infty$, where $\to_d$ denotes convergence in distribution.
\end{proof}

\begin{proof}[Proof of Theorem \ref{th2}]
The ratio of ${\rm V}[\hat{\delta}_R]$ to ${\rm V}[\hat{\delta}_T]$ is given by  
\[
\frac{{\rm V}[\hat{\delta}_R]}{{\rm V}[\hat{\delta}_T]} = \frac{\frac{\sigma_{R1}^2}{\tilde{N}_1} + \frac{\sigma_{R0}^2}{\tilde{N}_0}}{\frac{\sigma_{T1}^2}{n_{T1}} + \frac{\sigma_{T0}^2}{n_{T0}}},
\]
which can be bounded as follows:
\begin{align*}
    \frac{{\rm V}[\hat{\delta}_R]}{{\rm V}[\hat{\delta}_T]} 
    &\le \frac{\max\left( \frac{1}{\tilde{N}_1}, \frac{1}{\tilde{N}_0} \right)(\sigma_{R1}^2+\sigma_{R0}^2)}{\min\left( \frac{1}{n_{T1}}, \frac{1}{n_{T0}} \right)(\sigma_{T1}^2+\sigma_{T0}^2)} \\
    &\le C\, \frac{\max\left( n_{T1}, n_{T0} \right)}{\min\left( \tilde{N}_1, \tilde{N}_0 \right)}
    = O\left( \frac{n_T}{\tilde{N}} \right),
\end{align*}
where 
\[
C = \frac{\sigma_{R1}^2+\sigma_{R0}^2}{\sigma_{T1}^2+\sigma_{T0}^2}.
\]
Moreover, since 
\[
\tilde{N}_j \ge K\, \min\{n_{kj}\}_{k=1,\dots, K,\, j=0,1},
\]
we obtain  
\begin{align*}
    \frac{{\rm V}[\hat{\delta}_R]}{{\rm V}[\hat{\delta}_T]} 
    &\le C\, \frac{\max\left( n_{T1}, n_{T0} \right)}{K\, \min\{n_{kj}\}_{k=1,\dots, K,\, j=0,1}}
    = O\left( K^{-1} \right).
\end{align*}
When $w_{S_k i} = 0$ for all $k=1,\dots, K$ and $w_{Ti}=1$, we have $\tilde{N} = \tilde{N}_1 + \tilde{N}_0 = n_T$ and $\tilde{n}_k = 0$. In this case, 
\[
\frac{{\rm V}[\hat{\delta}_{R}]}{{\rm V}[\hat{\delta}_{T}]} = 1.
\]
\end{proof}

\begin{proof}[Proof of Theorem \ref{th-central-limit-theorem}]
The estimator $\hat{\bbe}_T$ can be written as 
\[
\hat{\bbe}_T = \bbe_T + (\X_{\cS}^T \W \X_{\cS})^{-1} \X_{\cS}^T \W \bep.
\]
From the model \eqref{eq-reg-model}, $\X_{\cS}$ and $\bep_{\cS}$ are independent. The expectation of $\hat{\bbe}_T$ is computed as
\begin{align*}
    {\rm E}_{\cS}[\hat{\bbe}_T] &= \bbe_T + {\rm E}_{\cS}\Bigl[(\X_{\cS}^T \W \X_{\cS})^{-1} \X_{\cS}^T \W \bep_{\cS}\Bigr] \\
    &= \bbe_T + {\rm E}_{\cS}\Bigl[(\X_{\cS}^T \W \X_{\cS})^{-1} \X_{\cS}^T \W\Bigr]\,{\rm E}_{\cS}[\bep_{\cS}] = \bbe_T.
\end{align*}
From condition (C3), 
\[
N \, (\X_{\cS}^T \W \X_{\cS})^{-1} \to \Q^{-1} 
\]
in probability as $N \to \infty$. Applying the central limit theorem (Corollary 2.7.1 of \cite{lehmann1999elements}) to $N^{-1/2} \X_{\cS}^T \W \bep_{\cS}$ and Slutsky's theorem (Theorem 2.3.3 of \cite{lehmann1999elements}), we obtain
\[
\sqrt{N} (\hat{\bbe}_T - \bbe_T) \to N\bigl(\mathbf{0}, \bSi_{\hat{\bbe}_T}\bigr)
\]
in distribution as $N \to \infty$.

Finally, we show that $\hat{\bSi}_{\hat{\bbe}_T} \to \bSi_{\hat{\bbe}_T}$. Let $\e = \hat{\Y}_{\cS} - \X_{\cS}^T \hat{\bbe}_T$ be the residual of the model \eqref{eq-reg-model}; note that the residual can be expressed as $\e = \bep_{\cS} + \X_{\cS} (\hat{\bbe}_T - \bbe_T)$. The outer product of the weighted residual is
\begin{align*}
    \e \e^T &= \bep_{\cS} \bep_{\cS}^T + \bep_{\cS} (\hat{\bbe}_T - \bbe_T)^T \X_{\cS}^T + \X_{\cS} (\hat{\bbe}_T - \bbe_T) \bep_{\cS}^T \\
    &\quad + \X_{\cS} (\hat{\bbe}_T - \bbe_T)(\hat{\bbe}_T - \bbe_T)^T \X_{\cS}^T \\
    &= \bep_{\cS} \bep_{\cS}^T + O_p(N^{-1/2}).
\end{align*}
From the covariate shift assumption (A2), we have 
\[
{\rm E}_{\cS} [w_i w_j \epsilon_i \epsilon_j] = {\rm E}_{T} [\epsilon_i \epsilon_j],
\]
where $w_i$ is the $i$-th element of $\W$ and $\epsilon_i$ is the $i$-th element of $\bep_{\cS}$.
Let $X_i$ is a $(p+2) \times 1$ vector which is the $i$-th row of the matrix $\X_{\cS}$, then
\begin{align*}
    N\, \hat{\bSi}_{\hat{\bbe}_T} 
    &= \{ \Q + o_p(1) \}^{-1} \frac{1}{N} \X_{\cS}^T \W \bSi_{\hat{\bbe}_T} \W \X_{\cS} \{ \Q + o_p(1) \}^{-1} \\
    &= \{ \Q + o_p(1) \}^{-1} \frac{1}{N} \diag\Bigl(\X_{\cS}^T \W \Bigl\{ \W \bep_{\cS} \bep_{\cS}^T \W + O_p(N^{-1/2})\Bigr\} \W \X_{\cS}\Bigr) \{ \Q + o_p(1) \}^{-1} \\
    &= \{ \Q + o_p(1) \}^{-1} \frac{1}{N} \sum_{i=1}^N X_i X_i^T\, w_i^4\, (\epsilon_i^2 + O_p(N^{-1/2})) \{ \Q + o_p(1) \}^{-1} \\
    &\to \Q^{-1} {\rm E}_{\cS} [X_i X_i^T w_i^4 \epsilon_i^2] \Q^{-1} \\
    &= \Q^{-1} {\rm E}_{T} [X_i X_i^T w_i^2 \epsilon_i^2] \Q^{-1} \\
    &= \Q^{-1} {\rm E}_{T} [\X_{\cS}^T \W \bep_{\cS} \bep_{\cS}^T \W \X_{\cS}] \Q^{-1} \\
    &= \Q^{-1} \X_{\cS}^T \W \bSi_{\bep} \W \X_{\cS} \Q^{-1},
\end{align*}
as $N \to \infty$. Thus, we have $\hat{\bSi}_{\hat{\bbe}_T} \to \bSi_{\hat{\bbe}_T}$.
\end{proof}

\section{Supplemental simulation}\label{supp-simulation}
\begin{figure}[H]
    \centering
    \includegraphics[width=0.9\linewidth]{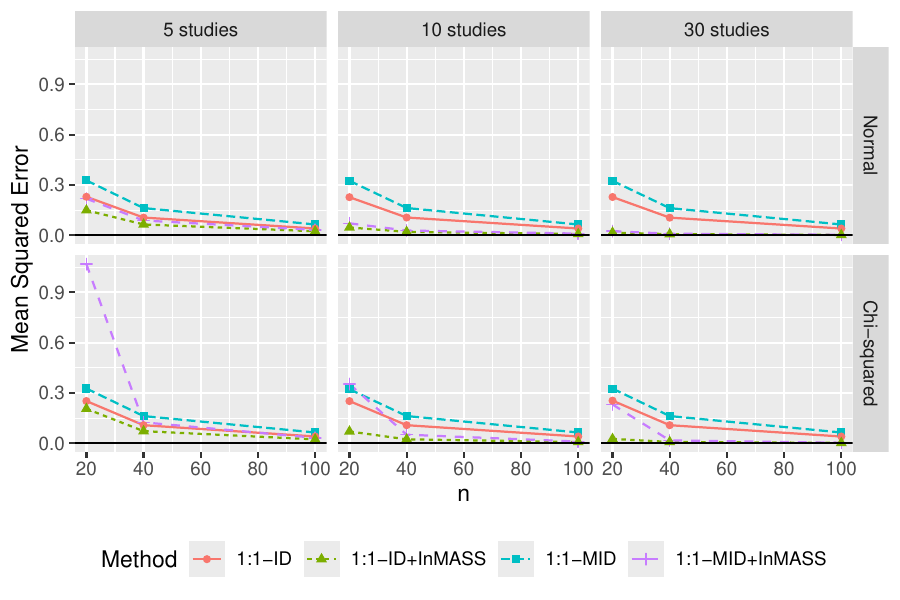}
    \caption{Means squared error of CATE for target population with equal randomization. ID: model identified; MID: model misidentified.}
    \label{fig-1to1-mse}
\end{figure}
\begin{figure}[H]
    \centering
    \includegraphics[width=0.9\linewidth]{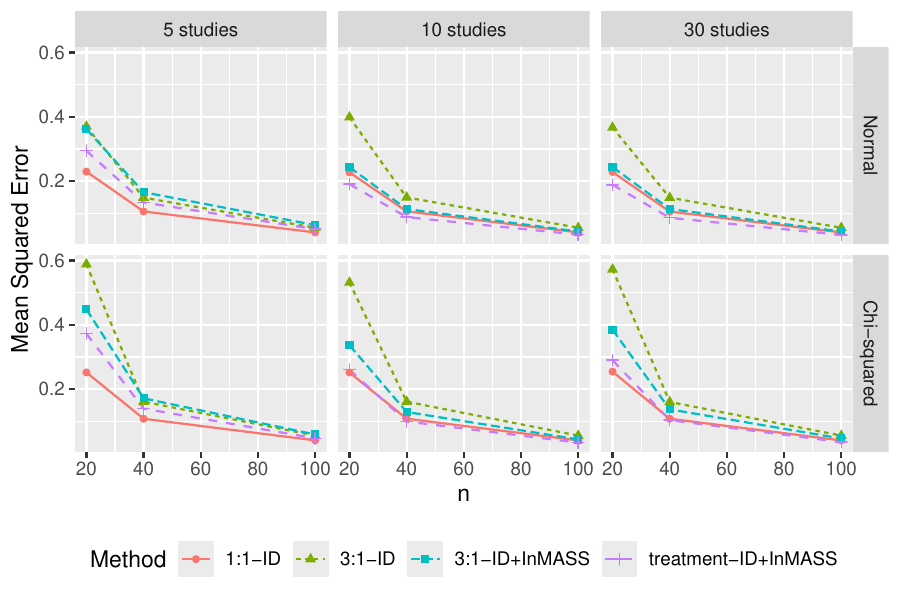}
    \caption{Means squared error of CATE for target population with unequal randomization. The model specified case. ID: model identified; MID: model misidentified.}
    \label{fig-3to1-mse}
\end{figure}
\begin{figure}[H]
    \centering
    \includegraphics[width=0.9\linewidth]{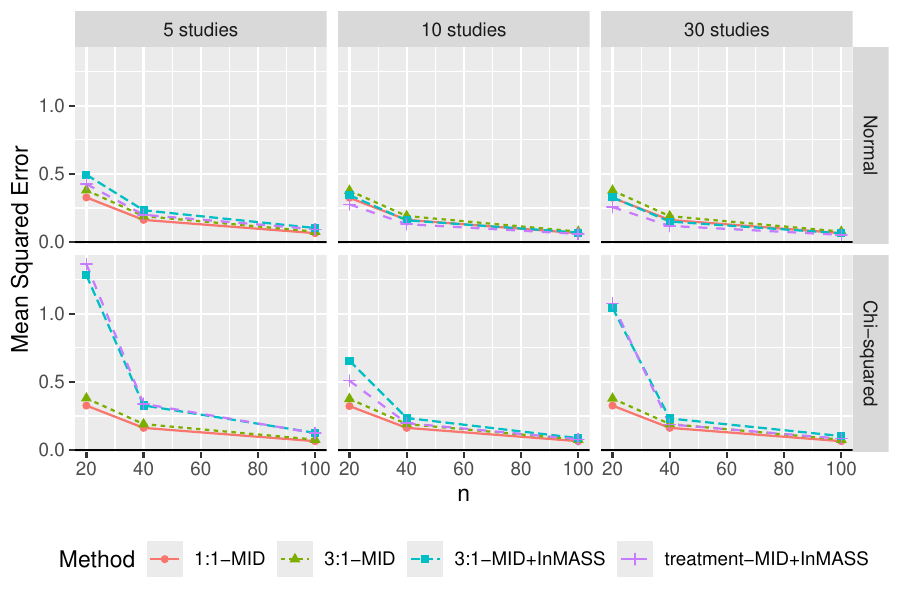}
    \caption{Means squared error of CATE for target population. The model misspeficied case. ID: model identified; MID: model misidentified.}
    \label{fig-controlAD-mse}
\end{figure}

\begin{figure}[H]
    \centering
    \includegraphics[width=1\linewidth]{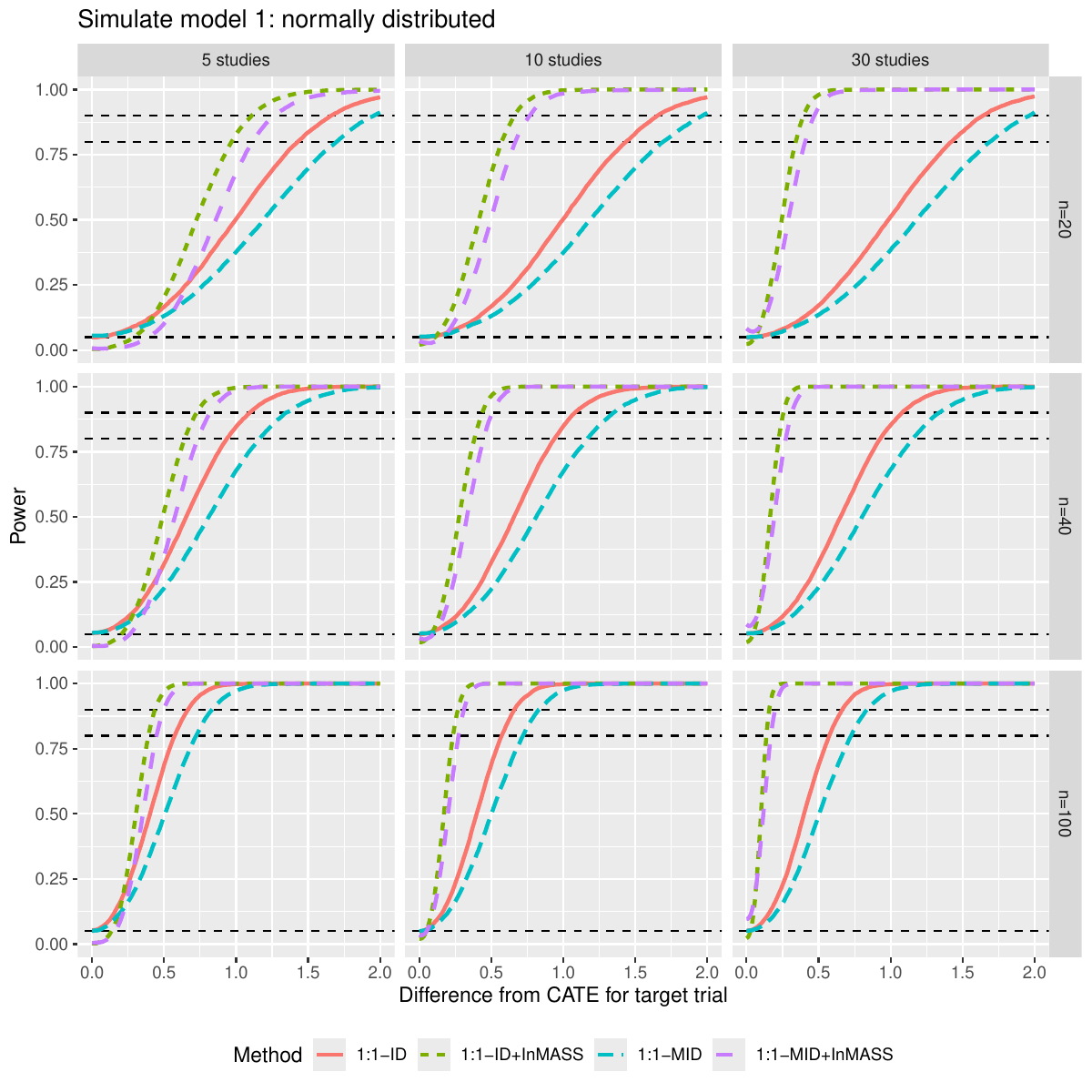}
    \caption{Power plot for the simulation model \eqref{eq-sim-model} with equal randomization and normally distributed covariates.}
    \label{fig-1to1-power1}
\end{figure}
\begin{figure}[H]
    \centering
    \includegraphics[width=1\linewidth]{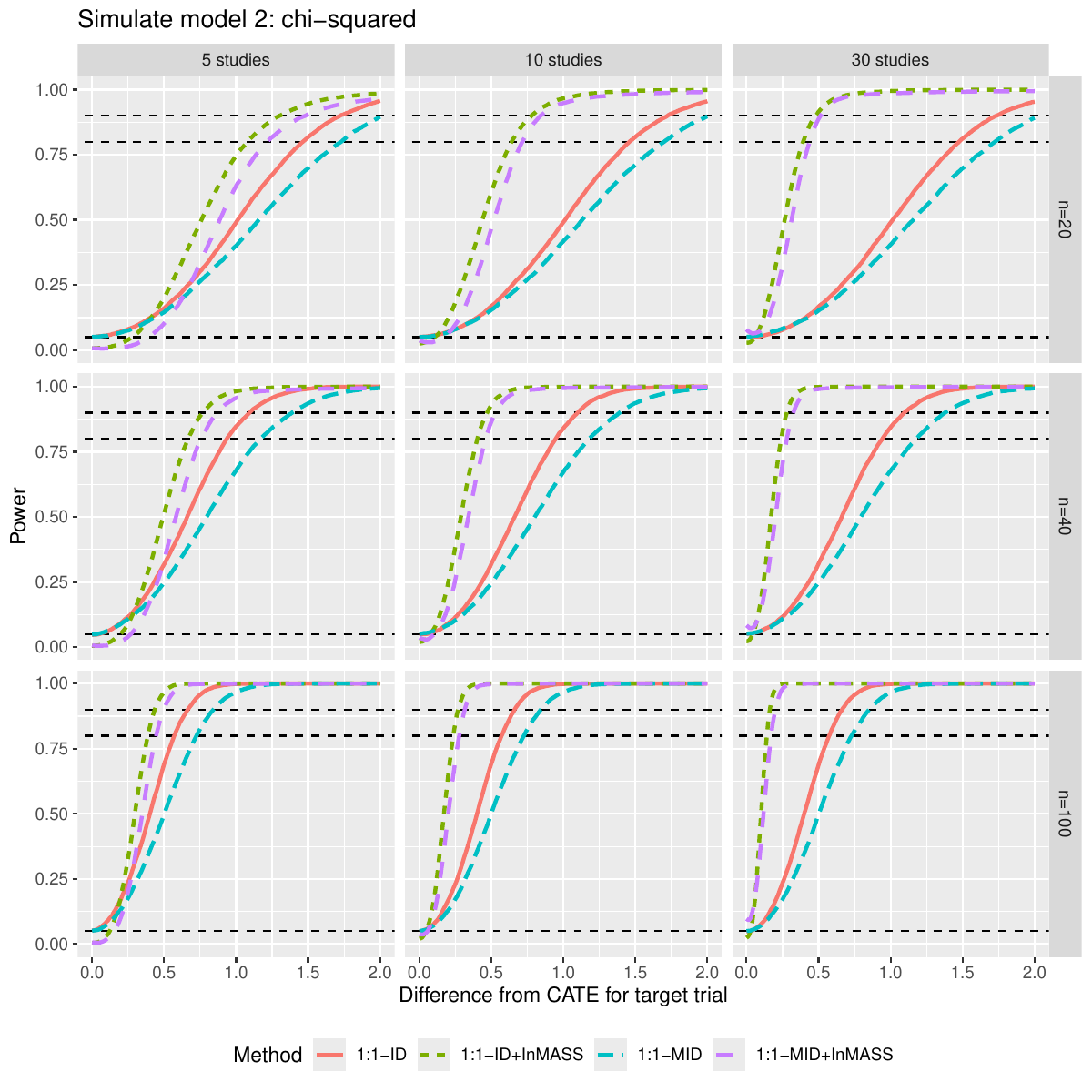}
    \caption{Power plot for the simulation model \eqref{eq-sim-model} with equal randomization and covariates following a chi-squared distribution.}
    \label{fig-1to1-power2}
\end{figure}

\begin{figure}[H]
    \centering
    \includegraphics[width=1\linewidth]{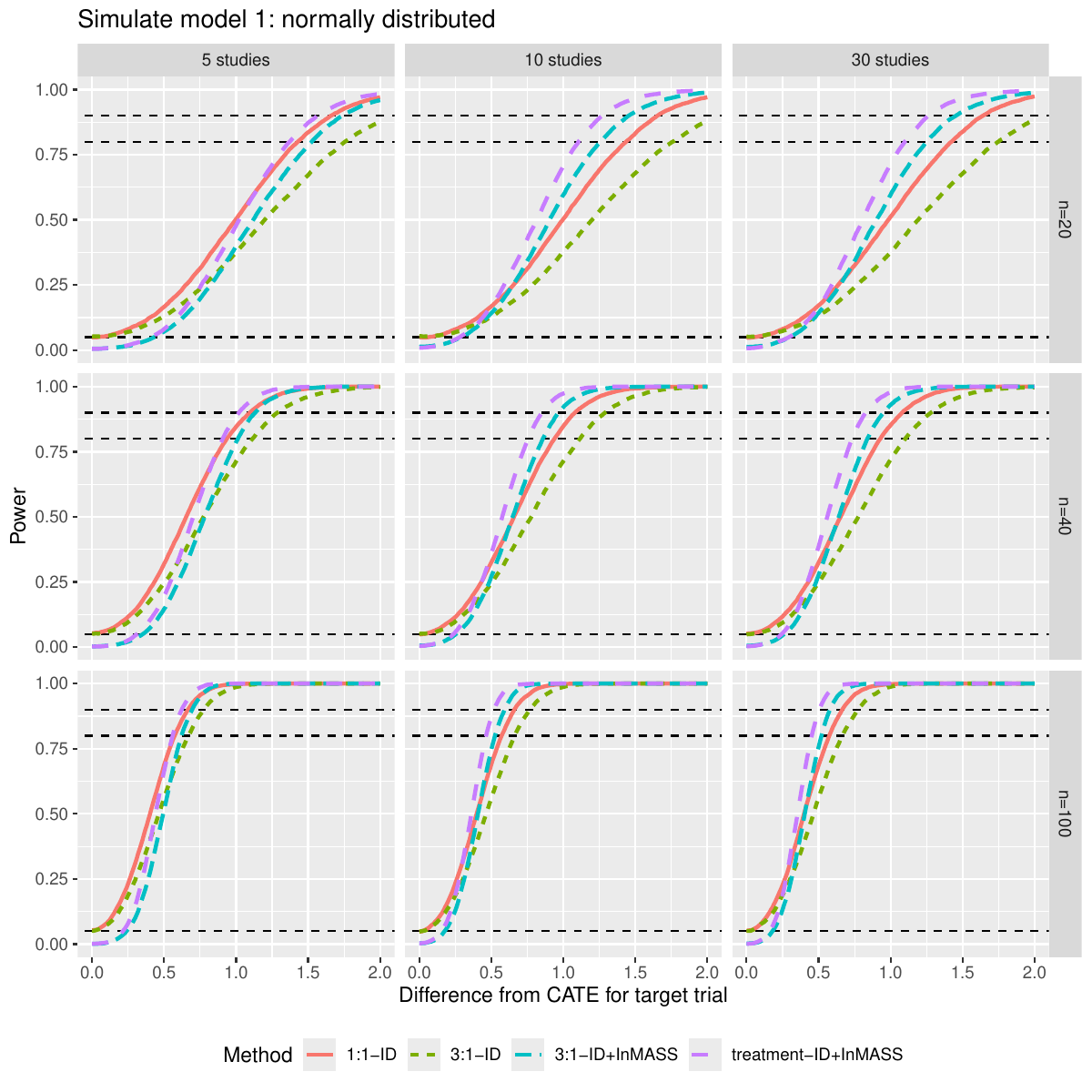}
    \caption{Power plot for the simulation model \eqref{eq-sim-model} with unequal randomization and normally distributed covariates.}
    \label{fig-controlAD-id-power1}
\end{figure}

\begin{figure}[H]
    \centering
    \includegraphics[width=1\linewidth]{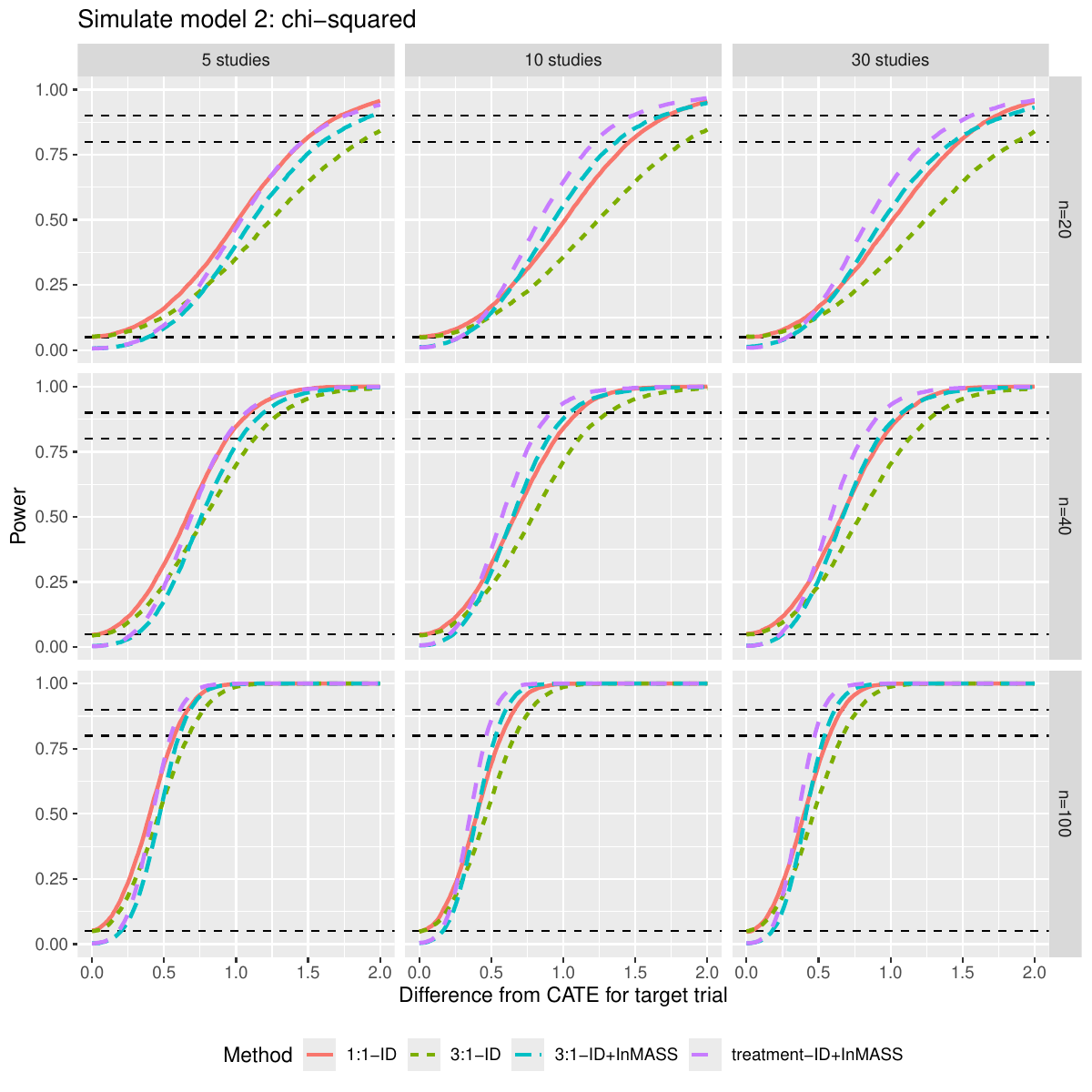}
    \caption{Power plot for the simulation model \eqref{eq-sim-model} with unequal randomization and covariates following a chi-squared distribution.}
    \label{fig-controlAD-id-power2}
\end{figure}

\begin{figure}[H]
    \centering
    \includegraphics[width=1\linewidth]{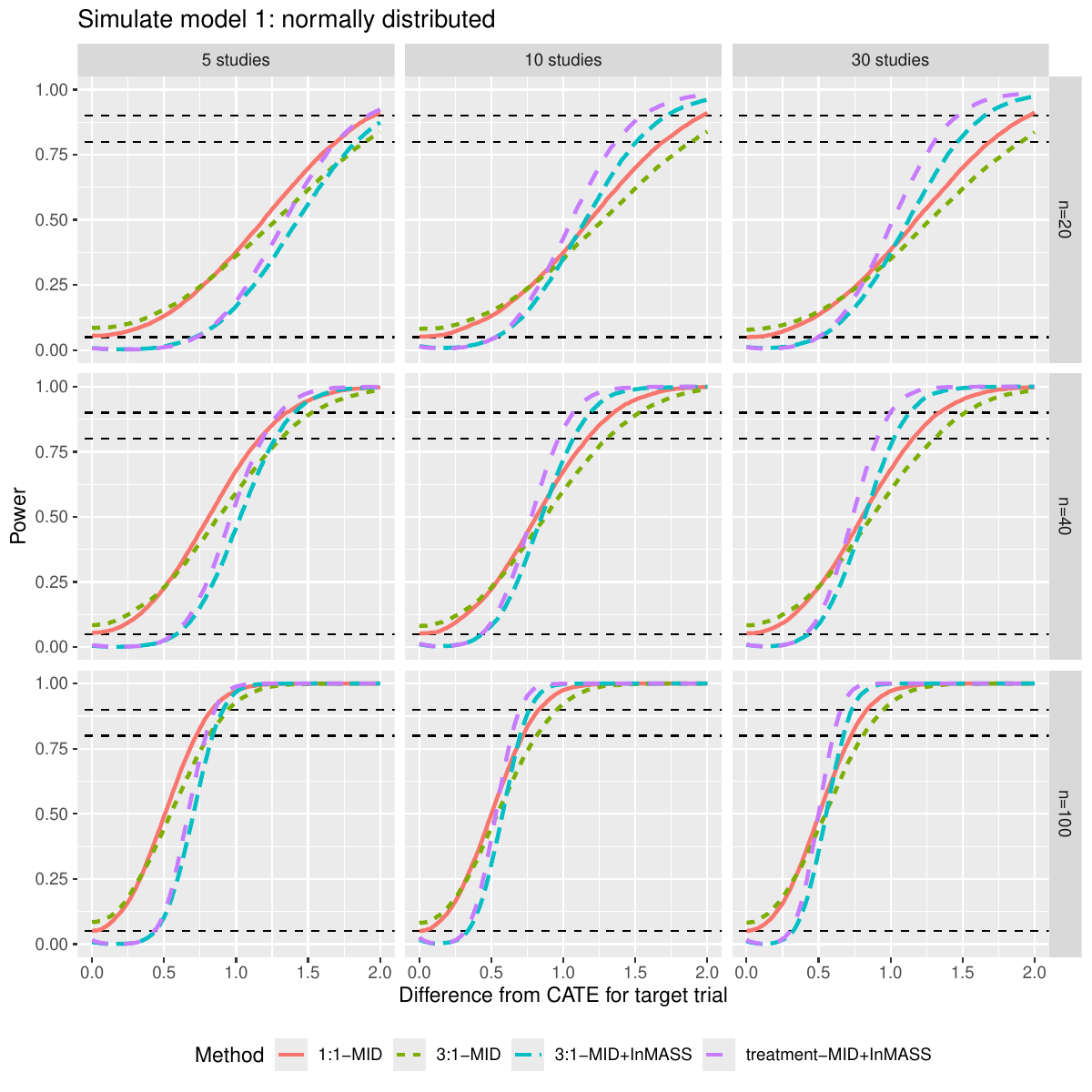}
    \caption{Power plot for the simulation model \eqref{eq-sim-model} with unequal randomization, normally distributed covariates, and model misspecification.}
    \label{fig-controlAD-mid-power1}
\end{figure}

\begin{figure}[H]
    \centering
    \includegraphics[width=1\linewidth]{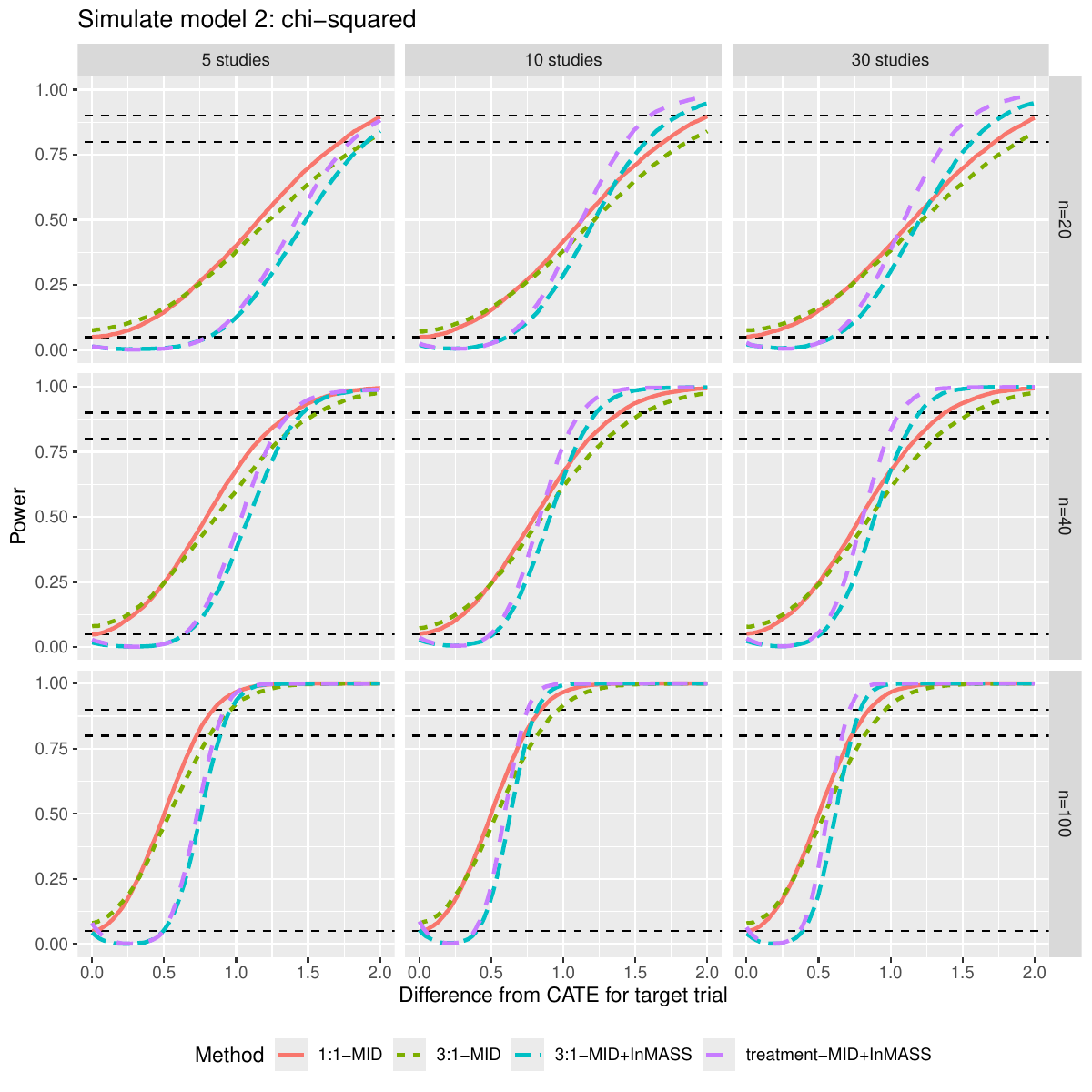}
    \caption{Power plot for the simulation model \eqref{eq-sim-model} with unequal randomization, covariates following a chi-squared distribution, and model misspecification.}
    \label{fig-controlAD-mid-power2}
\end{figure}

\bibliography{main.bib} 
\bibliographystyle{abbrvnat}

\end{document}